\definecolor{labelkey}{rgb}{0.6,0,1}
\newtheorem{thm}{Theorem}[section]
\newtheorem{prop}[thm]{Proposition}
\newtheorem{lemma}[thm]{Lemma}
\newtheorem{cor}[thm]{Corollary}
\newtheorem{hyp}{Assumption}
\theoremstyle{remark}
\newtheorem{rem}[thm]{Remark}
\theoremstyle{definition}
\newcommand{\dN}{\ensuremath{\mathbb{N}}}
\newcommand{\dW}{\ensuremath{\mathbb{W}}}
\newcommand{\cC}{\ensuremath{\mathcal{C}}}
\newcommand{\cF}{\ensuremath{\mathcal{F}}}
\newcommand{\cH}{\ensuremath{\mathcal{H}}}
\newcommand{\cL}{\ensuremath{\mathcal{L}}}
\newcommand{\cQ}{\ensuremath{\mathcal{Q}}}
\newcommand{\cV}{\ensuremath{\mathcal{V}}}
\newcommand{\cW}{\ensuremath{\mathcal{W}}}
\newcommand{\fL}{\ensuremath{\mathfrak{L}}}
\newcommand{\fM}{\ensuremath{\mathfrak{M}}}
\newcommand{\less}{<}
\newcommand{\more}{>}
\newcommand{\N}{\mathbb{N}}
\newcommand{\Px}{\mathbb{P}}
\newcommand{\Ex}{\mathbb{E}}
\newcommand{\R}{\mathbb{R}}
\begin{document}

\title{A level-1 Limit Order book with time dependent arrival rates}
\date{\today}
\author[J.~A. Chávez-Casillas]{Jonathan A. Chávez-Casillas}
\address{Department of Mathematics and Statistics, University of Calgary, Canada}
\email{jonathan.chavezcasil@ucalgary.ca}

\author[R.~J. Elliott]{Robert J. Elliott}
\address{Haskayne School of Business, University of Calgary, Canada, and
Centre for Applied Financial Studies, University of South Australia, Adelaide, Australia}
\email{elliott@ucalgary.ca}

\author[B. R\'emillard]{Bruno R\'emillard}
\address{GERAD, CRM,  and Department of Decision Sciences, HEC
Montr\'eal, 
Canada 
}
\email[Corresponding author]{bruno.remillard@hec.ca}
\date{January 4, 2017}
 \author[A.~V. Swishchuk]{Anatoliy V. Swishchuk}
\address{Department of Mathematics and Statistics, University of Calgary, Canada}
\email{aswish@ucalgary.ca}

\thanks{This research is supported by the Montreal Institute of Structured Finance and Derivatives, the Natural Sciences and Engineering Research Council of Canada, the Social Sciences and Humanities Research Council of Canada, and the Australian Research Council.}

\maketitle
\begin{abstract}
We propose a simple stochastic model for the dynamics of a limit order book, extending the recent work of Cont and de Larrard (2013), where the price dynamics are endogenous, resulting from market transactions. We also show that the conditional diffusion limit of the price process is the so-called Brownian meander.
\end{abstract}

\section{Introduction}

In the now classical approach of financial engineering, one assumes a given model for the price of assets, e.g., geometric Brownian motion, and then uses the  model to evaluate options or optimized portfolios. In this approach, the notion of bid/ask spread is generally not considered and the value of a portfolio is a linear function of the ``price'' of the assets. However, in practice,  the value of a portfolio is not a linear function of the prices. In addition, also in contrast to the classical approach,  the selling value of a portfolio is smaller than the buying value of the same positions. These values are really determined by the so-called limit order book, giving the list of possible bid/ask prices together with the size (number of shares available) at each price.

This limit order book changes rapidly over time, many orders possibly arriving within a millisecond. Either for testing high frequency trading strategies or deciding on an optimal way to buy or sell a large number of shares,
it is important to try to model the behavior of limit order books. Several authors suggested interesting models for limit order books. For example, in \citet{Smith/Farmer/Gillemot/Krishnamurthy:2003}, the authors assumed
that the tick size $\delta$ (least difference between two bid or ask prices)   is constant; this implies that prices are multiples of the tick size. They also assumed that
the markets orders (bid/ask) arrive independently at rate $\mu$ in chunks of $m$ shares; since these orders reduce the number of shares at the best bid or best ask price, they are usually combined with order cancellations. In their model,
the limit orders (bid/ask) also arrive independently at rate $\lambda$ in chunks of $m$ shares; the associated price is said to be selected ``uniformly'' amongst the possible bid prices or ask prices, whatever it means. Basically, they examined some properties of the resulting limit order book, trying to use techniques used in physics to characterize some macro quantities of their model.

More recently,
\citet{Cont/deLarrard:2013} proposed a similar model and they found the asymptotic behavior of the price. In fact, the behaviour  of the asset price is  a consequence of their model for orders arrivals. Contrary to \citet{Smith/Farmer/Gillemot/Krishnamurthy:2003}, they only consider the level-1 order book, meaning that only the best bid and best ask prices are taken into account. In order to do so, they assumed that the bid/ask spread $\delta$ is constant. As before, markets orders for the best bid/ask prices arrive independently at rate $\mu$, in chunks of $m$ shares, and
limit orders for the best bid/ask prices arrive independently at rate $\lambda$, also in chunks of $m$ shares.
When the size (number of shares) of the best bid price attains 0, the bid price decreases by $\delta$ and so does the ask price; the sizes of the best bid/ask prices are then chosen at random from a distribution $\tilde f$. When the size of the best ask price attains 0, the ask price increases by $\delta$ and so does the bid price; the sizes of the best bid/ask prices are then chosen at random from a distribution $f$. With this simple but tractable model, they were able to determine the asymptotic behavior of the price process, instead of assuming it.

According to some participants in the high frequency trading world, the hypothesis of constant arrivals of orders is not justified. Therefore, one should assumed that the arrival rates are time-dependent. This is the model proposed here. We extend the \citet{Cont/deLarrard:2013} setting by assuming that the rates for market orders and limit orders depend on time and that they are also different if they are bid or ask orders. As in \citet{Cont/deLarrard:2013}, under some simple assumptions, we are also able to find the limiting behavior of the price process, and we show how to estimate the main parameters of the model. The main ingredients are
the random times at which the price changes, the associated counting  process, and the distribution of the price changes.

More precisely, in Section \ref{sec:model}, we present the construction of the model we consider. Under some simplifying assumptions, we derive in Section \ref{sec:prop} the distribution of the random times at which the price changes. The asymptotic distribution of the price process is examined in Section \ref{sec:asympt}, while the estimation of the parameters is discussed in Section \ref{sec:est}, together with an example of implementation. The proofs of the main results are given in Appendix \ref{app:proofs}. 

\section{Description of the model}\label{sec:model}

We discuss a level-1 Limit Order Book model using as a framework the model proposed in \cite{Cont/deLarrard:2013}. However, the point processes describing the arrivals of Limit orders have time-dependent periodic rates proportional to the rate describing the arrival of Market orders plus Cancellations.


Recalling the Cont-de Larrard model we will define the level-1  Limit Order book model as follows:
\begin{itemize}
	\item There is just one level on  each side of the order book, i.e., one knows only the best bid and the best ask prices, together with their sizes (number of available shares at these prices).
	\item The spread is constant and always equals the tick size $\delta$.
	\item Order volume is assumed to be constant (set as one unit).
	\item Limit Orders at the bid  and ask sides of the book arrive independently according to  inhomogeneous Poisson  processes $\fL_t^b$ and $\fL_t^a$, with intensities  $\lambda^b_t$ and $\lambda_t^a$ respectively.
	\item Market Orders plus Cancellations at the bid and ask sides of the book arrive independently according to  inhomogeneous Poisson  processes  $\fM_t^b$ and $\fM_t^a$, with intensities $\mu^b_t$ and $\mu^a_t$ respectively.
	\item The processes $\fL_t^a, \fL_t^b, \fM_t^a$ and $\fM_t^b$ are all independent.
	\item Every time there is a depletion at the ask side of the book,  both the bid and the ask prices increase by one tick, and the size of both queues gets redrawn from some distribution $f\in\N^2$.

	\item Every time there is a depletion at the bid side of the book, both the bid and the ask prices decrease by one tick, and the size of both queues gets redrawn from some distribution $\tilde f\in\N^2$.

\end{itemize}

\subsection{Construction of the processes}\label{ssec:model}

First, consider the following infinitesimal generators of birth and death processes:
\begin{equation}\label{eq:generatora}
\left(L_t^a\right)_{ij} = \left\{\begin{array}{cl}
0, & i=0, j\ge 0,\\
\mu_t^a, & 1 \le i, j= i-1,\\
\lambda_t^a, & 1 \le i, j= i+1,\\
-\left(\mu_t^a+\lambda_t^a\right), & 1 \le i, j= i,\\
0, & \text{otherwise}.
\end{array}
\right.
\end{equation}
\begin{equation}\label{eq:generatorb}
\left(L_t^b\right)_{ij} = \left\{\begin{array}{cl}
0, & i=0, j\ge 0,\\
\mu_t^b, & 1 \le i, j= i-1,\\
\lambda_t^b, & 1 \le i, j= i+1,\\
-\left(\mu_t^b+\lambda_t^b\right), & 1 \le i, j= i,\\
0, & \text{otherwise}.
\end{array}
\right.
\end{equation}

Note that $0$ is an absorbing state for any Markov chain with generators $L^a$ or $L^b$. When a chain reaches the absorbing point $0$, one calls it extinction.

\bigskip
To describe precisely the behavior of the price process $S_t$ and the queues sizes process $q_t = (q_t^b,q_t^b)$, one needs to define the following sequence of random times.
Let $\sigma_{x_0}^{(b,1)}$ and  $\sigma_{y_0}^{(a,1)}$ be the extinction times of independent Markov chains $X^{(b,1)}$ and $X^{(a,1)}$ with generators
$L^{(b,1)}$ and  $L^{(a,1)}$, starting from $x_0$ and $y_0$ respectively, where $L_t^{(a,1)} = L_{t}^a$ and $L_t^{(b,1)} = L_{t}^b$. Further set $\tau_0=0$ and $\tau_1 = \min\left(\sigma_x^{(b,1)},\sigma_y^{(a,1)}\right)$.

Having defined $\tau_1, \ldots, \tau_{n-1}$, set
$V_{n-1} = \sum_{k=0}^{n-1}\tau_k$, and let $\sigma_{x_{n-1}}^{(b,n)}$ and  $\sigma_{y_{n-1}}^{(a,n)}$ be the extinction times of independent Markov chains $X^{(b,n)}$ and $X^{(a,n)}$ with generators
$L^{(b,n)}$ and  $L^{(a,n)}$,  starting respectively from $x_{n-1}$ and $y_{n-1}$, where $L_t^{(a,n)} = L_{V_{n-1}+t}^a$ and $L_t^{(b,n)} = L_{V_{n-1}+t}^b$, $t\ge 0$; then set $\tau_n = \min\left(\sigma_{x_{n-1}}^{(n)},\sigma_{y_{n-1}}^{(n)}\right)$. Here the random variables $(x_k,y_k)$ are $\cF_{\tau_{k}}$-measurable, for any $k\ge 0$. In fact, $(x_0,y_0)$ is chosen at random from distribution $f_0$, while $(x_n,y_n)$ is chosen at random from distribution $f_n$ if $\sigma_{x_{n-1}}^{(a,n)} < \sigma_{y_{n-1}}^{(b,n)}$ and chosen at random from distribution $\tilde f_n$ if $\sigma_{x_{n-1}}^{(a,n)} > \sigma_{y_{n-1}}^{(b,n)}$. Now for $t\in [V_{n-1},V_n)$, $q_t^b = X_{t-V_{n-1}}^{(b,n)}$ and $q_t^a=X_{t-V_{n-1}}^{(a,n)}$  starting respectively from $x_{n-1}$ and $y_{n-1}$ at time $V_{n-1}$. Finally,  the price process $S$, representing either the price or the log-price, is defined the following way: for $t\in [V_{n-1},V_n)$, $S_t =  S_{V_{n-1}}$
and $S_{V_{n-1}} = S_{V_{n-2}}+\delta$ if $\sigma_{x_{n-1}}^{(a,n)} < \sigma_{y_{n-1}}^{(b,n)}$ while $S_{V_{n-1}} = S_{V_{n-2}}-\delta$ if $\sigma_{x_{n-1}}^{(b,n)} < \sigma_{y_{n-1}}^{(a,n)}$.


In \cite{Cont/deLarrard:2013}, the authors assumed that the arrivals were time homogeneous, meaning that $L_t^a \equiv  Q^a$ and $L_t^b \equiv  Q^b$. In fact, most of their results were stated for the case $Q^a = Q^b =Q$, where
\begin{equation}\label{eqn:generatorQa}
Q^a_{ij}=\left\{\begin{array}{ccl} 0&\text{if}& i=0,\ j\geq0,\\
\mu^a&\text{if}& 1\leq i,\ j=i-1,\\
\lambda^a&\text{if}& 1\leq i,\ j=i+1,\\
-(\lambda^a+\mu^a)&\text{if}& 1\leq i,\ j=i,\\ 0&\text{if}& |i-j|\more1.\end{array}\right.
\end{equation}
\begin{equation}\label{eqn:generatorQb}
Q^b_{ij}=\left\{\begin{array}{ccl} 0&\text{if}& i=0,\ j\geq0,\\
\mu^b&\text{if}& 1\leq i,\ j=i-1,\\
\lambda^b&\text{if}& 1\leq i,\ j=i+1,\\
-(\lambda^b+\mu^b)&\text{if}& 1\leq i,\ j=i,\\ 0&\text{if}& |i-j|\more1.\end{array}\right.
\end{equation}
and
\begin{equation}\label{eqn:generatorQ}
Q_{ij}=\left\{\begin{array}{ccl} 0&\text{if}& i=0,\ j\geq0,\\
\mu&\text{if}& 1\leq i,\ j=i-1,\\
\lambda&\text{if}& 1\leq i,\ j=i+1,\\
-(\lambda+\mu)&\text{if}& 1\leq i,\ j=i,\\ 0&\text{if}& |i-j|\more1.\end{array}\right.
\end{equation}


\section{Distributional properties}\label{sec:prop}

Because of the independence between the ask and the bid side of the book before the first price change, to analyze the distribution of $\tau_1$, it is enough to study one side of the orderbook, say the ask. In this case, an explicit formula for $\Px[\sigma^{(a,1)}\more t]$ is given in the next section. 

\subsection{Distribution of the inter-arrival time between price changes}


Let $L_{t}$ be the infinitesimal generator of a non homogeneous birth and death process $X$ given by
\begin{equation}\label{eqn:generatorGen}
(L_{t})_{ij} = \left\{\begin{array}{ccl} 0&\text{if}& i=0,\ j\geq0,\\ \mu_t&\text{if}& 1\leq i,\ j=i-1,\\ \lambda_t&\text{if}& 1\leq i,\ j=i+1,\\-(\lambda_t+\mu_t)&\text{if}& 1\leq i,\ j=i,\\ 0&\text{if}& |i-j|\more1.\end{array}\right.
\end{equation}
Notice that 0 is an absorbing state. Also, let $\sigma_X$ be the first hitting times of 0 for this process, i.e.,
\begin{equation}\label{eqn:Defnsigma}
\sigma_X :=\inf\{t\more0|X_t=0\}.
\end{equation}
Then since $0$ is an absorbing state, one has  $\Px_x [ \sigma_X \le t] = \Px_x[X_t =0]$.

%
%
%

It is hopeless to expect solving the problem for general generators so as a first approach, some assumptions  $L^a$ and $L^b$ will be made.

\begin{hyp}\label{hyp:alpha}
There exists a measurable function $\alpha:\R_+\rightarrow\R_+$ such that $A_t = \int_0^t \alpha_s ds <\infty$ for any $t\ge 0$, with $L_t^a = \alpha_t Q^a$ and $L_t^b = \alpha_t Q^b$.
\end{hyp}

\begin{rem}\label{rem:rel:H:Q}
Under the assumption that  $L_t = \alpha_t Q$,  a process  $X$ with infinitesimal generator $L_t$ can be seen as a time change of a process $Y$  with infinitesimal generator $Q$, viz. $X_t= Y_{A_t}$.
In particular, if $\sigma_X$ and $\sigma_Y$ are respectively the first hitting time of $0$ for $X$ and $Y$, then for any $t\ge 0$,
\begin{equation}\label{eq:changetime}
F_L(t;x) := \Px[\sigma_X \le t|\;X_0=x]=\Px[\sigma_Y\le  A_t|\;Y_0=x] := F_Q(A_t;x).
\end{equation}
This result is essential in what follows since it implies  that the distribution of the time between price changes in the present model is comparable to the distribution of the inter-arrival time between price changes for the model considered by \cite{Cont/deLarrard:2013}.
\end{rem}

The following lemma gives the distribution of the extinction time $\sigma_Y$ of a birth and death process $Y$ with generator $Q$.

\begin{lemma}\label{lemma:Sol:u:Cont} Let $Y$ be a birth and death process with generator $Q$ given by \eqref{eqn:generatorQ}. If $\lambda \le \mu$, then
$1-F_Q(t;x) = \Px_x[\sigma_Y \more t]=u_{\lambda,\mu}(t,x)$, where
\begin{equation}\label{eqn:soln:IVP:C}
u_{\lambda,\mu}(t,x)= x \left(\frac{\mu}{\lambda}\right)^{x/2}\int_t^{\infty}\frac{1}{s}I_x\left(2s\sqrt{\lambda\mu}\right)e^{-s(\lambda+\mu)}ds,
\end{equation}
and where $I_\nu(\cdot)$ is the modified Bessel function of the first kind.

If $\lambda>\mu$, then
\begin{equation}\label{eqn:soln:IVP:CNew}
u_{\lambda,\mu}(t,x)= 1 -\left(\frac{\mu}{\lambda}\right)^{x} +  x \left(\frac{\mu}{\lambda}\right)^{x/2}\int_t^{\infty}\frac{1}{s}I_x\left(2s\sqrt{\lambda\mu}\right)e^{-s(\lambda+\mu)}ds.
\end{equation}
In particular,  $\Px_x[\sigma_Y =+\infty ]= 1 -\left(\frac{\mu}{\lambda}\right)^{x} > 0$.
\end{lemma}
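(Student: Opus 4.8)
The plan is to realize $Y$ as a discrete random walk run along exponential holding times (uniformization), reduce the law of $\sigma_Y$ to the law of the number of jumps needed to reach $0$, and then resum the resulting series into the Bessel integral, treating the total-mass defect separately in the two cases.

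First I would uniformize. Every transient state $i\ge 1$ has the same total exit rate $-Q_{ii}=\lambda+\mu=:\beta$, so up to absorption the process $Y$ consists of i.i.d.\ $\mathrm{Exp}(\beta)$ holding times together with an embedded jump chain $Z$ on $\N$ that moves $+1$ with probability $p=\lambda/\beta$ and $-1$ with probability $q=\mu/\beta$, absorbed at $0$. Let $N$ be the number of jumps $Z$ needs to reach $0$ from $x$. Because all transient states share the same rate, the holding times are i.i.d.\ and independent of $N$, so conditionally on $\{N=n\}$ one has $\sigma_Y\sim\mathrm{Gamma}(n,\beta)$, with density $\beta^n s^{n-1}e^{-\beta s}/(n-1)!$.

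Next I would compute the law of $N$ by the hitting-time (ballot/Kemperman) theorem. Since $Z$ is skip-free downward, reaching $0$ from $x$ is the first passage of an unrestricted $\pm1$ walk to $-x$, and the theorem gives $\Px[N=n]=\frac{x}{n}\binom{n}{(n-x)/2}p^{(n-x)/2}q^{(n+x)/2}$ for $n\ge x$, $n\equiv x\pmod 2$. I would then write the sub-density of $\sigma_Y$ as $f(s)=\sum_n \Px[N=n]\,\beta^n s^{n-1}e^{-\beta s}/(n-1)!$; the powers of $\beta$ cancel, $\frac{x}{n(n-1)!}=\frac{x}{n!}$, and after reindexing $n=x+2k$ the identity $\binom{x+2k}{k}/(x+2k)!=1/(k!(x+k)!)$ collapses the sum to $\sum_{k\ge 0}\lambda^k\mu^{x+k}s^{x+2k}/(k!(x+k)!)=(\mu/\lambda)^{x/2}I_x(2s\sqrt{\lambda\mu})$, recognized from the defining series of the modified Bessel function. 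This gives $f(s)=x(\mu/\lambda)^{x/2}s^{-1}I_x(2s\sqrt{\lambda\mu})e^{-s(\lambda+\mu)}$, whose tail integral over $(t,\infty)$ is exactly the integral appearing in \eqref{eqn:soln:IVP:C} and \eqref{eqn:soln:IVP:CNew}.

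Finally I would account for the total mass, which is where the cases split. By gambler's ruin the walk started at $x$ reaches $0$ with probability $1$ when $\lambda\le\mu$ and with probability $(\mu/\lambda)^x$ when $\lambda>\mu$, i.e.\ $\int_0^\infty f(s)\,ds=\min\{1,(\mu/\lambda)^x\}$. In the subcritical/critical case $\sigma_Y<\infty$ almost surely, so $\Px_x[\sigma_Y>t]=\int_t^\infty f$, yielding \eqref{eqn:soln:IVP:C}; in the supercritical case there is an atom $\Px_x[\sigma_Y=+\infty]=1-(\mu/\lambda)^x$ and $\Px_x[\sigma_Y>t]=\Px_x[\sigma_Y=+\infty]+\int_t^\infty f$, producing the extra constant in \eqref{eqn:soln:IVP:CNew}. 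The sum–integral interchanges are justified by nonnegativity (Tonelli), and the only genuinely delicate point is the combinatorial collapse of the first-passage series into the Bessel series; the rest is bookkeeping. An alternative would be to solve the backward differential–difference equation $\partial_t u = Qu$ with $u(t,0)=0$ and $u(0,x)=\mathbf{1}_{x\ge 1}$ via generating functions, but the probabilistic route above avoids any transform inversion.
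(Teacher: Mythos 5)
Your proof is correct, but it follows a genuinely different route from the paper's. The paper does not prove the lemma from scratch: for $\lambda\le\mu$ it cites Cont and de Larrard (2013), and for $\lambda>\mu$ it argues (in Remark \ref{rem:pfexplosion}) from the moment generating function $\Ex_x\left[e^{-s\sigma_Y}\right]=\left(\frac{\lambda+\mu+s-\sqrt{(\lambda+\mu+s)^2-4\lambda\mu}}{2\lambda}\right)^x$: letting $s\downarrow 0$ gives $\Px_x(\sigma_Y<\infty)=(\mu/\lambda)^x$, and the duality whereby the supercritical process conditioned on extinction has the law of the process with rates swapped gives $\Px_x[\sigma_Y>t\mid\sigma_Y<\infty]=u_{\mu,\lambda}(t,x)$, whence \eqref{eqn:soln:IVP:CNew}. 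Your argument instead is self-contained: uniformization with constant exit rate $\beta=\lambda+\mu$, Kemperman's hitting time theorem for the embedded skip-free walk, resummation of the first-passage series into the Bessel series, and gambler's ruin for the mass defect. I checked the key steps --- the cancellation of the powers of $\beta$, the identity $\binom{x+2k}{k}/(x+2k)!=1/\{k!(x+k)!\}$, and the matching with the series $I_x(z)=\sum_{k\ge0}(z/2)^{x+2k}/\{k!(x+k)!\}$ --- and they are right, so both cases follow from the single sub-density $f(s)=x(\mu/\lambda)^{x/2}s^{-1}I_x\left(2s\sqrt{\lambda\mu}\right)e^{-s(\lambda+\mu)}$, the two regimes differing only in the total mass $\min\{1,(\mu/\lambda)^x\}$. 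What the paper's route buys is brevity (it leans on a published result and a two-line duality argument); what yours buys is independence from that reference, a transparent explanation of where the Bessel function comes from, and a unified treatment of the subcritical, critical and supercritical cases. One point worth making explicit in your write-up: the identification $\Px_x[\sigma_Y=+\infty]=\Px_x[N=\infty]$ tacitly uses non-explosion, which here is immediate since the holding times are i.i.d.\ $\mathrm{Exp}(\beta)$, so their infinite sum diverges almost surely.
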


\begin{rem}\label{rem:pfexplosion}
The case $\lambda \le \mu$ is proven in  \cite{Cont/deLarrard:2013}.
For the case $\lambda>\mu$, note that  $\Ex_x\left[e^{-s\sigma_Y}\right] = \left(\frac{\lambda+\mu+s-\sqrt{(\lambda+\mu+s)^2-4\lambda\mu}}{2\lambda}\right)^x $, so
letting $s\downarrow 0$ yields $\Px_x(\sigma_Y<\infty) = \left(\frac{\mu}{\lambda}\right)^x$. It then follows that $\Px_x\left[\sigma_Y>t|\sigma_Y<\infty\right] = u_{\mu,\lambda}(t,x)$. Then
$
\Px_x\left[\sigma_Y>t\right] = 1 -\left(\frac{\mu}{\lambda}\right)^{x}  + \left(\frac{\mu}{\lambda}\right)^{x} u_{\mu,\lambda}(t,x)$.
Hence the result.
\end{rem}

It is important to analyze the tail behavior of the survival distribution for $\sigma_Y$. The following lemma, whose proof is deferred to Appendix \ref{app:proofs}, establishes such behavior.
Recall that $\Gamma(s,x) = \int_x^\infty u^{s-1}e^{-u}du$ is the incomplete gamma function.

\begin{lemma} \label{lemma:tail:sigma}
Let $Y$ be a birth and death process with generator $Q$ given by \eqref{eqn:generatorQ}, and assume that $\lambda\le \mu$. Set
 $\mathcal{C}=(\sqrt{\mu}-\sqrt{\lambda})^2$. Then, for a sufficiently large $T$,
\[
\Px[\sigma_Y \more T\;|\;Y_0=x]\sim\left\{\begin{array}{rrl}
\left(\frac{\mu}{\lambda}\right)^{x/2} \frac{x}{\sqrt{\pi\sqrt{\lambda\mu}}} \left[\frac{e^{-T\cC}}{\sqrt{T}} -
\sqrt{\cC}\Gamma\left(\frac{1}{2},T\mathcal{C}\right)\right]&\text{if}& \lambda\less\mu;\\ \frac{x}{\lambda\sqrt{\pi}} \frac{1}{\sqrt{T}}&\text{if}& \lambda=\mu.
\end{array}\right.
\]
Consequently, as expected, if $\lambda=\mu$, $\Ex_x[\sigma_{Y}]=\infty$, whereas if $\lambda\less\mu$,  $\Ex_x\left[e^{\theta \sigma_Y}\right] <\infty$ for $\theta < \cC$. In particular,
$\Ex\left[\sigma_{Y}^k\right]\less\infty$  for every $k\in\N$.

\end{lemma}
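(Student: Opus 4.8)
The plan is to start from the exact survival function supplied by Lemma~\ref{lemma:Sol:u:Cont},
\[
\Px_x[\sigma_Y > T] = x\left(\frac{\mu}{\lambda}\right)^{x/2}\int_T^\infty \frac{1}{s}\,I_x\!\left(2s\sqrt{\lambda\mu}\right)e^{-s(\lambda+\mu)}\,ds,
\]
and to extract the large-$T$ behaviour of the integral by inserting the classical large-argument expansion of the modified Bessel function, $I_x(z)=e^{z}(2\pi z)^{-1/2}(1+O(z^{-1}))$ as $z\to\infty$. Because the integration variable satisfies $s\ge T\to\infty$, the argument $2s\sqrt{\lambda\mu}$ is uniformly large, and with $\mathcal{C}=(\sqrt{\mu}-\sqrt{\lambda})^2$ one has $\lambda+\mu-2\sqrt{\lambda\mu}=\mathcal{C}$; hence the exponentials combine and the integrand is, to leading order, $\tfrac{1}{2}(\pi\sqrt{\lambda\mu})^{-1/2}\,s^{-3/2}e^{-s\mathcal{C}}$.

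The core computation is the evaluation of $\int_T^\infty s^{-3/2}e^{-s\mathcal{C}}\,ds$. I would integrate by parts with antiderivative $v=-2s^{-1/2}$, which yields the boundary contribution $2T^{-1/2}e^{-T\mathcal{C}}$ and a remaining integral $-2\mathcal{C}\int_T^\infty s^{-1/2}e^{-s\mathcal{C}}\,ds$; the change of variable $u=s\mathcal{C}$ identifies the latter with $\mathcal{C}^{-1/2}\Gamma(\tfrac12,T\mathcal{C})$. For $\lambda<\mu$ this gives
\[
\int_T^\infty s^{-3/2}e^{-s\mathcal{C}}\,ds = 2\left[\frac{e^{-T\mathcal{C}}}{\sqrt{T}} - \sqrt{\mathcal{C}}\,\Gamma\!\left(\tfrac12,T\mathcal{C}\right)\right],
\]
and restoring the prefactor $x(\mu/\lambda)^{x/2}\tfrac12(\pi\sqrt{\lambda\mu})^{-1/2}$ produces the first displayed asymptotic. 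For $\lambda=\mu$ one has $\mathcal{C}=0$ and $\sqrt{\lambda\mu}=\lambda$, the integral reduces to the elementary $\int_T^\infty s^{-3/2}\,ds=2T^{-1/2}$, and the resulting tail decays like $T^{-1/2}$; this is precisely the $\mathcal{C}\downarrow 0$ limit of the $\lambda<\mu$ formula, which is a useful consistency check.

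The delicate step, which I expect to be the main obstacle, is justifying that substituting the leading Bessel asymptotic produces a genuine asymptotic equivalence rather than a mere termwise approximation. This is subtle when $\lambda<\mu$ because the two terms in the bracket almost cancel: using $\Gamma(\tfrac12,T\mathcal{C})\sim(T\mathcal{C})^{-1/2}e^{-T\mathcal{C}}$ one sees that the bracket itself decays like $e^{-T\mathcal{C}}T^{-3/2}$, one power of $T$ faster than either term alone. I would therefore carry the uniform relative error $|r(s)|\le C/s$ of the Bessel expansion explicitly and bound the induced remainder by $C\int_T^\infty s^{-5/2}e^{-s\mathcal{C}}\,ds$, which a further Laplace-type estimate (the mass concentrating at the lower endpoint) shows to be $\sim e^{-T\mathcal{C}}/(\mathcal{C}T^{5/2})$, i.e.\ a factor $O(1/T)$ smaller than the main term; consequently the ratio of the true tail to the displayed expression tends to $1$. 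In the case $\lambda=\mu$ the analogous remainder is $O(T^{-3/2})$, negligible against $T^{-1/2}$, so no cancellation issue arises.

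Finally, the stated moment consequences are immediate from the tail rates, together with the fact (from Remark~\ref{rem:pfexplosion}) that $\sigma_Y<\infty$ almost surely when $\lambda\le\mu$. When $\lambda=\mu$ the tail $\Px_x[\sigma_Y>T]\sim c\,T^{-1/2}$ is not integrable in $T$, so $\Ex_x[\sigma_Y]=\int_0^\infty\Px_x[\sigma_Y>T]\,dT=\infty$. When $\lambda<\mu$ the tail is dominated by a constant multiple of $e^{-T\mathcal{C}}$, so in the identity $\Ex_x[e^{\theta\sigma_Y}]=1+\theta\int_0^\infty e^{\theta T}\Px_x[\sigma_Y>T]\,dT$ the integrand is bounded by a multiple of $e^{-(\mathcal{C}-\theta)T}$, which is integrable for every $\theta<\mathcal{C}$; in particular all moments $\Ex_x[\sigma_Y^k]$, $k\in\N$, are finite.
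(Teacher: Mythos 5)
Your proposal is correct and takes essentially the same route as the paper's proof: the exact survival formula of Lemma \ref{lemma:Sol:u:Cont}, substitution of the large-argument asymptotic $I_x(z)\sim e^z/\sqrt{2\pi z}$, reduction to $\int_T^\infty s^{-3/2}e^{-s\cC}\,ds$, the incomplete-gamma identity obtained by integration by parts, and tail estimates for the moment claims. The ``delicate step'' you flag is actually automatic---since both integrands are positive and asymptotically equivalent, their tail integrals are automatically asymptotically equivalent, and the near-cancellation inside the bracket is harmless because the bracket is an exact evaluation of $\tfrac12\int_T^\infty s^{-3/2}e^{-s\cC}\,ds$ rather than a termwise approximation of two separate asymptotics---so your explicit $O(1/T)$ relative-error bound, while valid, is extra work that the paper (justifiably) omits.
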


\begin{rem}\label{Rem:Cont:incons}
Note that if $\lambda=\mu$, the results in Lemma \ref{lemma:tail:sigma} agree with the results obtained in Eq. (6) in \cite{Cont/deLarrard:2013}. However, if $\lambda\less \mu$, Eq. (5) in \cite{Cont/deLarrard:2013} says that $
\Px[\sigma_Y\more T\;|\;Y_0=x] \sim \frac{x(\lambda+\mu)}{2\lambda(\mu-\lambda)}\frac{1}{T}$,
which is incorrect, since for a birth and death process with death rate larger than its birth rate , the extinction time $\sigma_Y$ has moments of all orders. An easy way to see this is to use the moment generating function  (mgf) computed in Proposition 1 of \cite{Cont/deLarrard:2013} and observe that if $\lambda\less \mu$, then the mgf is defined on an open interval around 0; see, e.g.,
 \cite[Section 21]{Billingsley:1995}.
\end{rem}

Lemma \ref{lemma:Sol:u:Cont} allows a closed formula to be obtained for the distribution of $\sigma_X$, when the rates are proportional to each other, as in Assumption \ref{hyp:alpha}. Such a formula is described in the following proposition, whose proof is deferred to Appendix \ref{app:proofs}.

\begin{prop}\label{prop:Sol:u:A1} Let $X$ be a birth and death process with generator $L$ satisfying  $L_t = \alpha_t Q$. If $\lambda\le \mu$, then the distribution of $\sigma_X$ is given by
\[
\Px_x[\sigma_X \more T]= \Px_x[\sigma_Y \more A_T]= x\left(\frac{\mu}{\lambda}\right)^{x/2}\int_{A_T}^{\infty}\frac{1}{s}I_x\left(2s\sqrt{\lambda\mu}\right)e^{-s(\lambda+\mu)}ds.
\]
\end{prop}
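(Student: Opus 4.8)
The proof of Proposition \ref{prop:Sol:u:A1} is essentially immediate once the machinery of the preceding results is assembled, so the plan is to chain together Remark \ref{rem:rel:H:Q} and Lemma \ref{lemma:Sol:u:Cont}. First I would invoke the time-change representation established in Remark \ref{rem:rel:H:Q}: under the standing assumption $L_t = \alpha_t Q$, the process $X$ with generator $L_t$ can be written as $X_t = Y_{A_t}$, where $Y$ has the homogeneous generator $Q$ and $A_t = \int_0^t \alpha_s\,ds$. Because $A$ is continuous and nondecreasing (indeed $\alpha \ge 0$ with $A_t < \infty$), the event $\{X$ has not yet hit $0$ by time $T\}$ coincides with the event $\{Y$ has not yet hit $0$ by time $A_T\}$; equivalently, the hitting times satisfy $\sigma_X > T$ iff $\sigma_Y > A_T$. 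This yields the identity
\begin{equation*}
\Px_x[\sigma_X > T] = \Px_x[\sigma_Y > A_T],
\end{equation*}
which is precisely relation \eqref{eq:changetime} read at the level of survival functions.

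Second, I would substitute the closed form for the survival function of $\sigma_Y$ supplied by Lemma \ref{lemma:Sol:u:Cont}. In the regime $\lambda \le \mu$ that formula reads $\Px_x[\sigma_Y > t] = u_{\lambda,\mu}(t,x)$ with the integral representation \eqref{eqn:soln:IVP:C}. Evaluating at $t = A_T$ gives
\begin{equation*}
\Px_x[\sigma_X > T] = u_{\lambda,\mu}(A_T,x) = x\left(\frac{\mu}{\lambda}\right)^{x/2}\int_{A_T}^{\infty}\frac{1}{s}I_x\!\left(2s\sqrt{\lambda\mu}\right)e^{-s(\lambda+\mu)}\,ds,
\end{equation*}
which is exactly the asserted expression. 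The only point requiring a word of care is the precise sense in which the time change is valid: one must ensure that the absorbing behavior at $0$ is respected by the representation $X_t = Y_{A_t}$, i.e. that $0$ is absorbing for both chains so that hitting $0$ is permanent and the two survival events genuinely coincide. This is guaranteed by the structure of the generators in \eqref{eq:generatora}--\eqref{eqn:generatorGen}, where the first row is identically zero.

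There is really no substantive obstacle here, since the hard analytic work has already been done in Lemma \ref{lemma:Sol:u:Cont} (the explicit Bessel-integral solution) and in the time-change observation of Remark \ref{rem:rel:H:Q}; the proposition is the composition of these two facts. If I wanted to make the argument fully self-contained rather than quoting Remark \ref{rem:rel:H:Q}, I would justify the time change directly: writing $X$ through its generator $L_t = \alpha_t Q$, a deterministic change of clock $t \mapsto A_t$ transforms the Kolmogorov forward equation $\partial_t p = p\,L_t = \alpha_t\, p\,Q$ into $\partial_s \tilde p = \tilde p\,Q$ for $\tilde p(s) = p(A^{-1}(s))$ on the range of $A$, identifying $X_t$ in law with $Y_{A_t}$; absorption at $0$ then transfers immediately. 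But given the results already in hand, the cleanest presentation is simply to cite \eqref{eq:changetime} and Lemma \ref{lemma:Sol:u:Cont} and read off the conclusion.
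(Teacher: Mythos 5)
Your proposal is correct and is exactly the argument the paper intends: the statement is the composition of the time-change identity \eqref{eq:changetime} from Remark \ref{rem:rel:H:Q} with the closed-form survival function of Lemma \ref{lemma:Sol:u:Cont} in the regime $\lambda\le\mu$ (indeed, although the paper announces the proof as deferred to the appendix, no separate argument appears there precisely because it reduces to this two-line composition). Your additional care about absorption at $0$ and the Kolmogorov-equation justification of the time change are sound refinements of the same route, not a different one.
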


\begin{cor}\label{cor:Distr:tau} Under Assumption \ref{hyp:alpha}, for $A_t=\int_0^t\alpha_sds$, the distribution of $\tau_1$ is given by
\begin{eqnarray*}
\Px_\cL[\tau_1\more T\;|\;  q_0=(x,y) ] &=& \Px_{L^b}[\sigma_x^{(b,1)}\more T] \Px_{L^a}[\sigma_y^{(a,1)}\more T]\\
&=&  \Px_{Q^b}[\sigma_x^{(b,1)}\more A_T] \Px_{Q^b}[\sigma_y^{(a,1)}\more A_T]\\
&=& \Px_\cQ[\tau_1\more A_T\;|\;  q_0=(x,y)].
\end{eqnarray*}
\end{cor}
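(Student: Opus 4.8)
The plan is to establish the three displayed equalities one at a time, reading the claim from top to bottom. The top and bottom equalities are consequences of the independence of the two sides of the book together with the representation $\tau_1 = \min\left(\sigma_x^{(b,1)},\sigma_y^{(a,1)}\right)$, while the middle equality is a direct application of the time-change identity \eqref{eq:changetime} from Remark \ref{rem:rel:H:Q}.

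First I would treat the top equality. By construction the chains $X^{(b,1)}$ and $X^{(a,1)}$ are independent, hence so are their extinction times $\sigma_x^{(b,1)}$ and $\sigma_y^{(a,1)}$. Since $\tau_1$ is their minimum, the event $\{\tau_1\more T\}$ equals $\{\sigma_x^{(b,1)}\more T\}\cap\{\sigma_y^{(a,1)}\more T\}$, and independence factorizes its probability into $\Px_{L^b}[\sigma_x^{(b,1)}\more T]\,\Px_{L^a}[\sigma_y^{(a,1)}\more T]$.

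For the middle equality, Assumption \ref{hyp:alpha} gives $L_t^a = \alpha_t Q^a$ and $L_t^b = \alpha_t Q^b$ with the \emph{common} time change $A_t = \int_0^t \alpha_s\,ds$. Applying \eqref{eq:changetime} separately to each side --- once with the generator pair $(L^a,Q^a)$ and once with $(L^b,Q^b)$ --- converts each survival probability at real time $T$ into the corresponding survival probability at the time-changed scale $A_T$, namely $\Px_{L^a}[\sigma_y^{(a,1)}\more T] = \Px_{Q^a}[\sigma_y^{(a,1)}\more A_T]$ and $\Px_{L^b}[\sigma_x^{(b,1)}\more T] = \Px_{Q^b}[\sigma_x^{(b,1)}\more A_T]$. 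Multiplying yields the second line (with $Q^a$ in place of the misprinted $Q^b$ in the ask factor).

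Finally, the bottom equality simply runs the independence-of-minimum argument of the first step in reverse, now inside the time-homogeneous model $\cQ$: because the two sides are again independent there, the product $\Px_{Q^b}[\sigma_x^{(b,1)}\more A_T]\,\Px_{Q^a}[\sigma_y^{(a,1)}\more A_T]$ is exactly $\Px_\cQ[\tau_1\more A_T\;|\;q_0=(x,y)]$. I do not anticipate any genuine obstacle; the only point deserving a moment's care is that \eqref{eq:changetime} is invoked side-by-side with the \emph{same} $\alpha$ for both queues, which is precisely what Assumption \ref{hyp:alpha} supplies.
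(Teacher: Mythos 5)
Your proof is correct and follows essentially the same route as the paper's: the paper's one-line proof invokes exactly the same three ingredients, namely $\tau_1=\sigma_y^{(a,1)}\wedge\sigma_x^{(b,1)}$, the independence of the two extinction times, and the time-change identity (cited there as Proposition \ref{prop:Sol:u:A1}, which is the content of \eqref{eq:changetime} under Assumption \ref{hyp:alpha}). Your side observation is also right: the second displayed line of the corollary contains a typo, with $Q^b$ appearing in the ask factor where $Q^a$ is meant.
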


\begin{proof}
The result follows from the fact that $\tau_1=\sigma_y^{(a,1)}\wedge\sigma_x^{(b,1)}$, Proposition \ref{prop:Sol:u:A1} and the independence between $\sigma_y^{(a,1)}$ and $\sigma_x^{(b,1)}$.
\end{proof}

Now, we present the asymptotic behavior of the survival distribution function of $\tau_1$ under $\cL$.
It follows directly from Lemma \ref{lemma:tail:sigma} and Corollary \ref{cor:Distr:tau}. 

\begin{lemma}\label{lemma:asympt:tau}
Let $\mathcal{C}_a=(\sqrt{\mu^a}-\sqrt{\lambda^a})^2$, $\mathcal{C}_b=(\sqrt{\mu^b}-\sqrt{\lambda^b})^2$, and set $F_\cL(t:x,y) = \Px_\cL\left[\tau_1 \le t\;\Big|\;q_0^b=x, q_0^a=y\right]$, $t\ge 0$. Assume that $\lambda^a \le \mu^a$ and $\lambda^b \le \mu^b$.
Then, as $T\to\infty$, $ 1-F_\cL(T:x,y)$ is asymptotic to
\begin{eqnarray*}
&& \left(\frac{\mu^b}{\lambda^b}\right)^{x/2}  \left(\frac{\mu^a}{\lambda^a}\right)^{y/2}    \frac{xy}{\pi(\lambda^a\lambda^b\mu^a\mu^b)^{1/4} }
 \left[\frac{\exp(-A_{T}\cC_a)}{\sqrt{A_{T}}} - \sqrt{\cC_a} \Gamma\left(\frac{1}{2},A_{T}\cC_a\right)\right] \\
 &&  \qquad\qquad \times \left[\frac{\exp(-A_{T}\cC_b)}{\sqrt{A_{T}}} - \sqrt{\cC_b} \Gamma\left(\frac{1}{2},A_{T}\cC_b\right)\right].
\end{eqnarray*}
In particular,
if $\lambda^a=\mu^a$ and $\lambda^b=\mu^b$, then
$$
A_T\Px_\cL[(\tau_1 > T|q_0=(x,y)] \stackrel{T\to\infty}{\to } \frac{xy}{\pi\sqrt{\lambda^a \lambda^b}}.
$$
\end{lemma}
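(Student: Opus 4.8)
The plan is to exploit the product structure of the survival function of $\tau_1$ established in Corollary~\ref{cor:Distr:tau} and thereby reduce the two-sided problem to two applications of the one-sided tail estimate of Lemma~\ref{lemma:tail:sigma}. Since $\tau_1=\sigma_x^{(b,1)}\wedge\sigma_y^{(a,1)}$ with the bid and ask extinction times independent, Corollary~\ref{cor:Distr:tau} gives
\[
1-F_\cL(T;x,y)=\Px_{L^b}[\sigma_x^{(b,1)}\more T]\,\Px_{L^a}[\sigma_y^{(a,1)}\more T],
\]
so it suffices to obtain the tail asymptotics of each factor and multiply them.

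First I would pass from the inhomogeneous rates to the homogeneous reference process. By Proposition~\ref{prop:Sol:u:A1} (equivalently the time change of Remark~\ref{rem:rel:H:Q}), each one-sided survival probability equals the corresponding quantity for the process with generator $Q^b$ or $Q^a$ evaluated at the deterministic time $A_T$, namely $\Px_{L^b}[\sigma_x^{(b,1)}\more T]=\Px_{Q^b}[\sigma_x^{(b,1)}\more A_T]$ and $\Px_{L^a}[\sigma_y^{(a,1)}\more T]=\Px_{Q^a}[\sigma_y^{(a,1)}\more A_T]$. Because $\alpha$ is a nonnegative periodic intensity with positive mean over a period, $A_T\to\infty$ as $T\to\infty$, so I may feed the argument $A_T$ into the large-argument asymptotics of Lemma~\ref{lemma:tail:sigma}. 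Applying that lemma on the bid side (with $\mu^b,\lambda^b,x$) yields, in the subcritical regime $\lambda^b\less\mu^b$,
\[
\Px_{Q^b}[\sigma_x^{(b,1)}\more A_T]\sim\left(\tfrac{\mu^b}{\lambda^b}\right)^{x/2}\frac{x}{\sqrt{\pi\sqrt{\lambda^b\mu^b}}}\left[\frac{e^{-A_T\cC_b}}{\sqrt{A_T}}-\sqrt{\cC_b}\,\Gamma\!\left(\tfrac12,A_T\cC_b\right)\right],
\]
together with the analogous equivalence on the ask side obtained by replacing $b$ by $a$ and $x$ by $y$.

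The final step is to multiply the two equivalences. Since each right-hand side is eventually positive, $f_1\sim g_1$ and $f_2\sim g_2$ imply $f_1f_2\sim g_1g_2$, and the two normalizing factors combine as $\sqrt{\pi\sqrt{\lambda^a\mu^a}}\cdot\sqrt{\pi\sqrt{\lambda^b\mu^b}}=\pi(\lambda^a\lambda^b\mu^a\mu^b)^{1/4}$, producing precisely the claimed product of brackets. For the critical case $\lambda^a=\mu^a$, $\lambda^b=\mu^b$ I would instead invoke the $\lambda=\mu$ branch of Lemma~\ref{lemma:tail:sigma}, which after the time change gives $\Px_{Q^b}[\sigma_x^{(b,1)}\more A_T]\sim\frac{x}{\sqrt{\pi\lambda^b}}A_T^{-1/2}$ and $\Px_{Q^a}[\sigma_y^{(a,1)}\more A_T]\sim\frac{y}{\sqrt{\pi\lambda^a}}A_T^{-1/2}$; multiplying, the two $A_T^{-1/2}$ factors combine to $A_T^{-1}$, whence $A_T\,\Px_\cL[\tau_1\more T\mid q_0=(x,y)]\to\frac{xy}{\pi\sqrt{\lambda^a\lambda^b}}$.

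I do not expect a genuine obstacle, since the statement is a composition of facts already in hand; the only points that require care are (i) justifying $A_T\to\infty$, so that the large-$T$ asymptotics of Lemma~\ref{lemma:tail:sigma} transfer faithfully to the argument $A_T$, and (ii) the bookkeeping of constants, in particular verifying that the two normalizing factors multiply to $\pi(\lambda^a\lambda^b\mu^a\mu^b)^{1/4}$ and, in the critical case, that the product of the two one-sided $A_T^{-1/2}$ tails yields the stated constant $xy/(\pi\sqrt{\lambda^a\lambda^b})$.
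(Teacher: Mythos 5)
Your proof is correct and is exactly the paper's argument: the paper's entire "proof" of this lemma is the remark that it "follows directly from Lemma \ref{lemma:tail:sigma} and Corollary \ref{cor:Distr:tau}", which is precisely your route of factorizing the survival function of $\tau_1$ via the time change $A_T$ and then multiplying the two one-sided tail asymptotics. One small point in your favor: your critical-case tail $\Px_{Q^b}[\sigma_x^{(b,1)}\more A_T]\sim x/\sqrt{\pi\lambda^b A_T}$ is the correct constant (it is the $\cC=0$ specialization of the general bracket formula and is the only one consistent with the claimed limit $xy/(\pi\sqrt{\lambda^a\lambda^b})$), whereas the $\lambda=\mu$ branch as printed in Lemma \ref{lemma:tail:sigma}, namely $x/(\lambda\sqrt{\pi T})$, contains a typo ($\lambda$ in place of $\sqrt{\lambda}$).
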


\begin{rem}\label{rem:bizarre}
It might happen that either $\lambda^a > \mu^a$ or $\lambda^b > \mu^b$. If both these conditions hold, there is a positive probability that the queues will never deplete, so this case must be excluded. There are basically two cases left. The following result follows directly from the proof of Lemma \ref{lemma:asympt:tau}.

\begin{itemize}
\item[(C1)]
 Suppose that $\lambda^b > \mu^b$ and  $\lambda^a \le \mu^a$. Then, as $T\to\infty$, $ 1-F_\cL(T:x,y)$ is asymptotic to
$$
\left[1-\left(\frac{\mu^b}{\lambda^b}\right)^{x} \right] \left(\frac{\mu^a}{\lambda^a}\right)^{y/2}    \frac{y}{\pi(\lambda^a\mu^a)^{1/4} }
 \left[\frac{\exp(-A_{T}\cC_a)}{\sqrt{A_{T}}} - \sqrt{\cC_a} \Gamma\left(\frac{1}{2},A_{T}\cC_a\right)\right].
$$
\item[(C2)]
 Suppose that $\lambda^a > \mu^a$ and  $\lambda^b \le \mu^b$. Then, as $T\to\infty$, $ 1-F_\cL(T:x,y)$ is asymptotic to
$$
\left[1-\left(\frac{\mu^a}{\lambda^a}\right)^{y} \right] \left(\frac{\mu^b}{\lambda^b}\right)^{x/2}    \frac{x}{\pi(\lambda^b\mu^b)^{1/4} }
 \left[\frac{\exp(-A_{T}\cC_b)}{\sqrt{A_{T}}} - \sqrt{\cC_b} \Gamma\left(\frac{1}{2},A_{T}\cC_b\right)\right].
$$
In particular,
if $\lambda^a >\mu^a$ and $\lambda^b=\mu^b$, then
$$
\sqrt{A_T}\Px_\cL[(\tau_1 > T|q_0=(x,y)] \stackrel{T\to\infty}{\to } \frac{x}{\pi\sqrt{\lambda^b}} \left[1-\left(\frac{\mu^a}{\lambda^a}\right)^{y} \right].
$$
\end{itemize}

\end{rem}

\subsection{Probability of a price increase}
In \citet[Proposition 3]{Cont/deLarrard:2013}, the authors considered an asymmetric order flow as given here by the processes $Y^a$ and $Y^b$ for computing the probability of a price increase. This was not used elsewhere in their paper. They obtained the following result, which we cite without much changes. However there are some typos that are corrected here. The proof of the result is given in \citet{VanLeeuwaarden/Raschel:2013}.

\begin{prop} Suppose that $\lambda^a\le \mu^a$ and $\lambda^b\le \mu^b$.
Given $(q^b,q^a)=(x,y)$, the probability $p^{up}(x,y)$ that the next price change is an increase is
\begin{eqnarray*}
p^{up}(x,y) &=& 1-\frac{1}{\pi}\left(\frac{\mu^a}{\lambda^a}\right)^y  \left( \frac{2 \sqrt{\lambda^a\mu^a}}{\mu^a+\lambda^a}\right) \int_0^\pi H_t^x \sin(yt)\sin(t) \\
&& \quad \times \left\{ \frac{2\lambda^b H_t-G_t}{2\frac{\sqrt{\lambda^a\mu^a}}{\mu^a+\lambda^a}\cos(t)-1}\right\}\left\{\frac{1}{\sqrt{G_t^2 - 4\lambda^b\mu^b}}\right\}dt,
\end{eqnarray*}
where $\Sigma = \mu^a+\mu^b+\lambda^a+\lambda^b$,
$G_t = \Sigma-2 \sqrt{\lambda^a\mu^a}\cos(t)$, and
$
H_t = \frac{G_t-\sqrt{G_t^2 - 4\lambda^b\mu^b}}{2\lambda^b}$.
\end{prop}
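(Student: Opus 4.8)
The plan is to identify $p^{up}(x,y)$ with a boundary hitting probability for a two-dimensional lattice walk and then solve the resulting harmonic problem by the kernel method; the special shape of the answer (the factors $\sin(yt)\sin(t)$, the power $H_t^x$, and the integral over $[0,\pi]$) is exactly what that method delivers. First I would translate the event. By the construction in Section \ref{sec:model}, the next price move is an increase precisely when the ask queue empties strictly before the bid queue, i.e. when $\sigma^a_y<\sigma^b_x$, where $\sigma^a_y$ and $\sigma^b_x$ are the extinction times of the \emph{independent} birth-death chains with generators $Q^a$ and $Q^b$ started from $y$ and $x$. Since both extinction times have continuous laws, ties are negligible and $p^{up}(x,y)=\Px(\sigma^a_y<\sigma^b_x)$. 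I would then uniformize the joint chain at the common total rate $\Sigma=\lambda^a+\mu^a+\lambda^b+\mu^b$: sampled along a rate-$\Sigma$ Poisson clock, the pair $(q^b,q^a)$ becomes a nearest-neighbour random walk $(B_n,A_n)$ on $\{i\ge 1,\,j\ge 1\}$ with step probabilities $\lambda^b/\Sigma,\ \mu^b/\Sigma$ in the bid coordinate and $\lambda^a/\Sigma,\ \mu^a/\Sigma$ in the ask coordinate. Because which axis is reached first depends only on the jump sequence and not on the timing, $p^{up}(x,y)$ equals the probability that this walk hits the horizontal axis $\{j=0\}$ before the vertical axis $\{i=0\}$, started at $(x,y)$.

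Next I would set $h(i,j)=p^{up}(i,j)$ and observe that $h$ is the unique bounded solution of the discrete harmonic equation $\Sigma\,h(i,j)=\lambda^b h(i+1,j)+\mu^b h(i-1,j)+\lambda^a h(i,j+1)+\mu^a h(i,j-1)$ on the interior, with boundary data $h(i,0)=1$ and $h(0,j)=0$; uniqueness follows from a maximum-principle argument, using $\lambda^a\le\mu^a$ and $\lambda^b\le\mu^b$ to ensure almost-sure depletion (hence $p^{up}+p^{down}=1$ and boundedness). Encoding $h$ together with its two boundary traces into generating functions turns this recurrence into a single kernel functional equation, with kernel $K(z,w)=\lambda^b z+\mu^b z^{-1}+\lambda^a w+\mu^a w^{-1}-\Sigma$ and with the unknowns the one-variable generating series of the boundary values.

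The core step is the kernel-method resolution. Parametrizing the ask coordinate by $w=\sqrt{\mu^a/\lambda^a}\,e^{it}$, $t\in[0,\pi]$, gives $\lambda^a w+\mu^a w^{-1}=2\sqrt{\lambda^a\mu^a}\cos t$, so that $K=0$ forces $\lambda^b z+\mu^b z^{-1}=\Sigma-2\sqrt{\lambda^a\mu^a}\cos t=G_t$; one checks $G_t\ge\lambda^b+\mu^b\ge 2\sqrt{\lambda^b\mu^b}$, so the root of this quadratic lying in the closed unit disc is exactly $z=H_t$. Substituting $(z,w)=(H_t,\sqrt{\mu^a/\lambda^a}\,e^{it})$ annihilates the unknown interior series and leaves a relation pinning down the boundary generating functions. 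Here the weight $\sin(yt)\sin(t)$ is the Karlin--McGregor spectral density of the ask coordinate (via $U_{y-1}(\cos t)=\sin(yt)/\sin t$ for the Chebyshev polynomials of the second kind, with the semicircle weight $\propto\sin^2 t$), the factor $(\mu^a/\lambda^a)^y$ and the normalization $2\sqrt{\lambda^a\mu^a}/(\mu^a+\lambda^a)$ come from symmetrizing the asymmetric generator $Q^a$, and the surviving bid-coordinate contribution is $H_t^x$ together with $1/\sqrt{G_t^2-4\lambda^b\mu^b}$. Extracting the coefficient of the $(x,y)$ term by a contour integral that collapses to a real integral over $[0,\pi]$ then reproduces the displayed formula, the leading $1$ coming from $p^{up}=1-p^{down}$.

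The main obstacle is precisely this kernel-method bookkeeping: selecting the correct branch $H_t$, justifying the analytic continuation of the boundary series onto the kernel curve, and carrying out the inversion so that it reduces to $\int_0^\pi$ with exactly these trigonometric and square-root factors and exactly these constants. This delicate analysis is the content of \citet{VanLeeuwaarden/Raschel:2013}, and tracking every constant faithfully is what corrects the typos in the original statement; I would also verify that $\lambda^a\le\mu^a$ and $\lambda^b\le\mu^b$ is precisely the regime in which depletion is almost sure, so that $p^{up}(x,y)$ is a bona fide probability and the representation is valid.
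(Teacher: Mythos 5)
Your proposal is correct and takes essentially the same route as the paper: the paper offers no independent argument but simply cites \citet{VanLeeuwaarden/Raschel:2013} for the proof, and your reduction---ties are null, uniformize at rate $\Sigma$ into a nearest-neighbour quarter-plane walk, pose the discrete Dirichlet problem $h(i,0)=1$, $h(0,j)=0$, and solve it by the kernel method with the root $H_t$ inside the unit disc---is precisely the argument carried out in that reference. The only ingredient the paper adds beyond the citation, which you could note, is that under Assumption \ref{hyp:alpha} the identical formula holds for the time-inhomogeneous model, since the time change $X^a_t=Y^a_{A_t}$, $X^b_t=Y^b_{A_t}$ is common to both queues and therefore does not alter which one depletes first.
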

Under Assumption \ref{hyp:alpha}, the same result applies for our model since $X^a_t = Y^a_{A_t}$ and $X^b_t=Y^b_{A_t}$.
\begin{rem}
One can also use Lemma \ref{lemma:Sol:u:Cont} and Proposition \ref{prop:Sol:u:A1} to obtain the previous result by integration.
\end{rem}

\section{Long-run dynamics of the price process}\label{sec:asympt}

Let $V_n$ be the time of the $n$-th jump in the price, as defined in Section \ref{ssec:model}.
We are interested in analyzing the asymptotic behavior of the number of price changes up to time $t$, that is, in describing the counting  process
\begin{equation}\label{Eqn:Nt:1}
N_t:=\max\{n\ge 0\;|\;V_n\leq t\}, \quad t\ge 0.
\end{equation}

\subsection{Asymptotic behavior of the counting  process $N$}\label{ssec:counting}

The next proposition, depending on a new assumption, whose proof is  deferred to Appendix \ref{app:proofs}, provides an expression which relates the distribution of the partial sums for the waiting times between price changes for the models with the generators $\cL$ and $\cQ$.

\begin{hyp}\label{hyp:ftilde}
      $\sum_{(x,y) \in \dN^2} \tilde f(x,y)\Px_\cQ [ \tau_1 \le t|q_0^b =x, q_0^a = y] = \sum_{(x,y) \in \dN^2 }   f(x,y)\Px_\cQ [ \tau_1 \le t|q_0^b =x, q_0^a = y] = F_{1,\cQ}(t)$. This is true for example, when (i) $\tilde f(x,y)=f(y,x)$ and $Q^a = Q^b$, or (ii) $\tilde f=f$. Properties (i) and (ii) are used for example in \citet{Cont/deLarrard:2013}.
\end{hyp}

\begin{prop}\label{prop:Distr:Sn}
Recall that $A_t=\int_0^t\alpha_sds$.  Then,  under Assumptions \ref{hyp:alpha}--\ref{hyp:ftilde},
\[
\Px_{\cL}[V_n\leq t |\; q_0^b =x, q_0^a = y ]=\Px_\cQ[V_n \leq A_t |\; q_0^b =x, q_0^a = y].
\]

\end{prop}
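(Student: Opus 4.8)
The plan is to prove the equivalent statement that, conditionally on $q_0=(x,y)$, the variable $A_{V_n}$ has the same law under $\Px_\cL$ as $V_n$ has under $\Px_\cQ$. Since $A_t=\int_0^t\alpha_s\,ds$ is deterministic, continuous and strictly increasing (the natural case $\alpha_t>0$), one has $\{V_n\le t\}=\{A_{V_n}\le A_t\}$, so such a distributional identity gives $\Px_\cL[V_n\le t]=\Px_\cL[A_{V_n}\le A_t]=\Px_\cQ[V_n\le A_t]$, which is the claim. The device is the time change of Remark \ref{rem:rel:H:Q}, applied not once but successively over each inter-arrival interval, and the observation that the $k$-th pair of chains is defined to regenerate afresh at $V_{k-1}$.

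First I would write $A_{V_n}=\sum_{k=1}^n W_k$ with $W_k:=A_{V_k}-A_{V_{k-1}}$, which telescopes since $A_{V_0}=A_0=0$. The key step is to identify the conditional law of $W_k$ given the history $\cF_{V_{k-1}}$. On $[V_{k-1},V_k)$ the two side chains have generators $L^{(a,k)}_t=\alpha_{V_{k-1}+t}Q^a$ and $L^{(b,k)}_t=\alpha_{V_{k-1}+t}Q^b$; shifting the time origin to $V_{k-1}$ and applying Remark \ref{rem:rel:H:Q} with the clock $t\mapsto A_{V_{k-1}+t}-A_{V_{k-1}}$ exhibits each side chain as a $\cQ$-chain run on that clock. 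Writing $(x',y')=q_{V_{k-1}}$ for the (re-drawn, $\cF_{V_{k-1}}$-measurable) starting queue sizes, the change of variable $w=A_{V_{k-1}+s}-A_{V_{k-1}}$ then yields
\[
\Px_\cL\!\left[W_k>w\mid\cF_{V_{k-1}}\right]=\Px_\cQ\!\left[\tau_1>w\mid q_0=(x',y')\right],
\]
which crucially does \emph{not} depend on $V_{k-1}$. For $k=1$ this is precisely Corollary \ref{cor:Distr:tau} with $V_0=0$, so $W_1$ has law $\Px_\cQ[\tau_1\le\cdot\mid q_0=(x,y)]$, matching the first increment of $V_n$ under $\cQ$.

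Next I would match the redraw mechanism. Because both side chains share the \emph{common} clock $A_{V_{k-1}+\cdot}-A_{V_{k-1}}$, the ordering of their extinction times is preserved by the time change, so the event fixing the direction of the $k$-th price change (hence whether $q_{V_k}$ is drawn from $f$ or $\tilde f$) has the same law as in the $\cQ$ model. Invoking Assumption \ref{hyp:ftilde}, for every $k\ge 2$ the law of $W_k$ after averaging over this redraw is $F_{1,\cQ}$, irrespective of the direction; combined with the conditional law above and the regeneration of the chains at each $V_{k-1}$, this shows that $W_2,\dots,W_n$ are i.i.d.\ with law $F_{1,\cQ}$ and independent of $W_1$. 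The identical renewal structure holds for the increments $\tau_1,\dots,\tau_n$ of $V_n$ under $\Px_\cQ$, again by Assumption \ref{hyp:ftilde}. Hence $A_{V_n}=\sum_{k=1}^n W_k$ and $V_n$ (under $\cQ$) are equal in law conditionally on $q_0=(x,y)$, and the proposition follows.

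I expect the main obstacle to be the second step: justifying rigorously that the conditional law of $W_k$ is free of $V_{k-1}$, and that the direction of the price change is preserved by the time change. This needs the correct shifted-clock form of Remark \ref{rem:rel:H:Q} for the non-homogeneous generators $\alpha_{V_{k-1}+t}Q^{a},\alpha_{V_{k-1}+t}Q^{b}$, the regeneration of the $k$-th chains at $V_{k-1}$ together with the $\cF_{V_{k-1}}$-measurability of $q_{V_{k-1}}$, and the fact that a common increasing time change preserves the order of two independent extinction times. Assumption \ref{hyp:ftilde} is then exactly what lets one discard the bookkeeping of price-change directions and collapse every post-first increment to the single one-step law $F_{1,\cQ}$.
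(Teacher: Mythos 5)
Your argument is correct, but it is organized genuinely differently from the paper's. The paper proves Proposition \ref{prop:Distr:Sn} by induction on $n$: it writes the cdf of $V_{n+1}$ under $\cL$ as a convolution by conditioning on $V_n=u$ and on the redrawn queue sizes $z$, applies the shifted-clock identity $\Px_\cL[\tau_{n+1}\le t-u \mid q_u=z]=\Px_\cQ\bigl[\tau_1\le A_t-A_u \mid q_0=z\bigr]$ together with Assumption \ref{hyp:ftilde}, and then changes variables $u\mapsto A_u$, using the induction hypothesis in the density form $f_{n,\cL}(u;z_0)=f_{n,\cQ}(A_u;z_0)\alpha_u$, to recognize $F_{n+1,\cQ}(A_t;z_0)$. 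You instead prove the equivalent (in fact slightly stronger) statement that, conditionally on $q_0=(x,y)$, the increments $W_k=A_{V_k}-A_{V_{k-1}}$ under $\cL$ reproduce exactly the renewal sequence of $\cQ$: $W_1$ has law $\Px_\cQ[\tau_1\le\cdot \mid q_0=(x,y)]$ and $W_2,\dots,W_n$ are i.i.d.\ with law $F_{1,\cQ}$, independent of $W_1$, whence $A_{V_n}\stackrel{d}{=}V_n$ under $\cQ$. Both proofs rest on the same two pillars---the shifted clock $s\mapsto A_{V_{k-1}+s}-A_{V_{k-1}}$ of Remark \ref{rem:rel:H:Q} (with Corollary \ref{cor:Distr:tau} as the one-step case), and Assumption \ref{hyp:ftilde} to erase the dependence on the direction of the price change---but your pathwise formulation dispenses with the induction and with the densities $f_{n,\cQ},f_{n,\cL}$ that the paper manipulates somewhat informally, and it delivers for free the joint equality in law of $(A_{V_1},\dots,A_{V_n})$ with $(V_1,\dots,V_n)$ under $\cQ$, which is what is actually exploited in Section \ref{ssec:counting}. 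In exchange, you must (and do) address two points the convolution computation never meets head-on: that the common increasing clock preserves which queue empties first, and that the conditional law of $W_k$ given $\cF_{V_{k-1}}$ does not depend on $V_{k-1}$.

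One small caveat: your opening reduction $\{V_n\le t\}=\{A_{V_n}\le A_t\}$ invokes strict monotonicity of $A$, i.e.\ $\alpha_t>0$, whereas Assumption \ref{hyp:alpha} only requires $\alpha_t\ge 0$. The gap is harmless but worth closing: if $A$ is constant on a maximal interval $[t_1,t_2]$, then no extinction time of a chain run on the clock $A$ (or on a shifted clock) can fall in $(t_1,t_2]$, because the defining infimum $\inf\{s: A_s\ge \sigma_Y\}$ would already be attained at $t_1$; hence $V_n$ almost surely never lies strictly inside a flat stretch of $A$, and the equivalence of the two events persists without assuming $\alpha>0$.
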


\begin{rem}
Under generator $\cQ$, $\tau_1, \tau_2, \ldots, \tau_n$ are independent and $\tau_2, \ldots, \tau_n$ are i.i.d. The starting point $(x,y)$ must be random with the correct distribution in order that $\tau_1$ has the same law as $\tau_2$.
\end{rem}

In order  to deal with the counting process $N$,
 we need another assumption.

\begin{hyp}\label{hyp:Average}
There exists a positive constant $\upsilon$ such that $\frac{A_t}{t} \to \upsilon$ as $t\to\infty$.
\end{hyp}

\begin{rem}
Assumption \ref{hyp:Average} is true for example if $\alpha$ is periodic. Such an assumption makes sense. One can easily imagine that $\alpha$ repeats itself everyday. Of course, it must be validated empirically.
One can also suppose that $\alpha$ is random but independent of the other processes. In this case, $\alpha$ would act as a random environment and if we assume that $\alpha$ is stationary and ergodic, then Assumption \ref{hyp:Average} holds almost surely. However, in this case, all  computations are conditional on the environment.
\end{rem}

In order to obtain the asymptotic behavior of the prices, there are two cases to be taken into account:  $\cC_a+\cC_b>0$  and  $\cC_a+\cC_b=0$.

\subsubsection{Case $\cC_a+\cC_b>0$}\label{sssec:C>0}
First, assume that \begin{equation}\label{eqn:mean1}
\gamma_1 = \sum_{(x,y)\in \dN^2}  xy \left(\frac{\mu^b}{\lambda^b}\right)^{x/2}  \left(\frac{\mu^a}{\lambda^a}\right)^{y/2}  f(x,y) <\infty.
\end{equation}
Now, from \citet[p. 376]{Abramowitz/Stegun:1972}, $ I_n(z)= \frac{1}{\pi}\int_0^\pi e^{z \cos{\theta}}\cos(n\theta)d\theta 	$, so
for any $x\in \dN$, $I_n(z)\le e^{z}$.
In this case, it follows from Lemma \ref{lemma:Sol:u:Cont} and Lemma \ref{lemma:tail:sigma} that
\begin{eqnarray*}
\Ex_\cQ(\tau_1) &=& \sum_{(x,y)\in \dN^2}  xy \left(\frac{\mu^b}{\lambda^b}\right)^{x/2}  \left(\frac{\mu^a}{\lambda^a}\right)^{y/2}  f(x,y)  \int_0^\infty \int_0^\infty  t\wedge s g_{x,b}(t) g_{y,a}(s)  dt ds\\
&\le & \frac{\gamma_1}{\max(\cC_a,\cC_b)} <\infty,
\end{eqnarray*}
where $g_{y,a}(s) = \frac{1}{s}I_y\left(2s\sqrt{\lambda^a\mu^a }\right)e^{-s(\lambda^a+\mu^a)}$ and $g_{x,b}(s) = \frac{1}{s}I_x\left(2s\sqrt{\lambda^b\mu^b }\right)e^{-s(\lambda^b+\mu^b)}$.
 Then, under Assumptions \ref{hyp:alpha}--\ref{hyp:ftilde} and under model $\cQ$, $V_n/n \to  \Ex_\cQ(\tau_1) <\infty$ a.s.  Using Assumption \ref{hyp:Average} and Lemma \ref{lemma:asympt:tau}, one then finds that under model $\cL$, $V_n/n$ converges in probability to $ c_1= \Ex_\cQ(\tau_1)/\upsilon$. Finally, using Propositions \ref{prop:laplace}--\ref{prop:counting}, one finds that under $\cL$, $N_t/t$ converges in probability to $\frac{1}{c_1} = \upsilon/\Ex_\cQ(\tau_1)$.
 In addition, $\frac{N_{\lfloor nt\rfloor} - nt/c_1}{\sqrt{n}}  \rightsquigarrow \frac{1}{c_1^{3/2}}\dW(t)$, where $\dW$ is a Brownian motion. This follows from the convergence of $V_n$, under $\cQ$, to a Brownian motian. It also holds under $\cL$, using Assumption \ref{hyp:Average}.

\subsubsection{Case $\cC_a+\cC_b=0$}\label{sssec:C=0}
Assume that \begin{equation}\label{eqn:mean0}
\gamma_0 = \sum_{(x,y)\in \dN^2}  xy   f(x,y) <\infty.
\end{equation}

Then it follows from Lemma \ref{lemma:asympt:tau} and Proposition \ref{prop:psi} that
$$
T\Px_\cL[\tau_1 > T] \stackrel{T\to\infty}{\to } c_0 = \frac{\gamma_0}{\upsilon\pi\sqrt{\lambda^a \lambda^b}}.
$$
As a result, using Propositions \ref{prop:laplace}--\ref{prop:counting} with $f(n) = n\log{n}$, one finds that under $\cL$, $N_t/(t/\log{t})$ converges in probability to $\frac{1}{c_0} = \frac{\upsilon\pi\sqrt{\lambda^a \lambda^b}}{\gamma_0}$.
In particular, if $a_n = n\log{n}$, then $N_{a_n t}/n$ converges in probability to $\frac{t}{c_0}$.
Also, $\frac{V_{\lfloor nt \rfloor}}{n}-c_0 t \log{n} \rightsquigarrow \frac{1}{\upsilon}\cV_t$,
 where $\cV$ is a stable process of index $1$.
It then follows that $\frac{N_{\lfloor n\log{n}  t\rfloor} - nt/c_0}{n/\log{n}}\rightsquigarrow - \frac{1}{c_0 \upsilon}\cV_t $. Note that
$\cV_1$ is the weak limit of $ \frac{V_n}{n}- c_0\upsilon \log{n}$ under $\cQ$, and  $\cV_1 = \tilde \cV_1 +d_0$, where  $d_0$ is the limit of
$ nb_n-c_0\upsilon \log{n}$, where $b_n = \Ex_{\cQ}\{\sin(\tau_1/n)\}$. Next, it follows from \citet{Feller:1971} that  the characteristic function of $\tilde \cV_1$ is $e^{\psi(\zeta)}$, where
$$
\psi(\zeta) = -|\zeta|c_0\upsilon \left\{\frac{\pi}{2}+i{\rm sgn}(\zeta)\log{|\zeta|}\right\}.
$$

\subsection{Asymptotic behavior of the price process}

Under no other additional hypothesis on $f$ and $\tilde f$ than Assumption \ref{hyp:ftilde}, the sequence $(\xi_i)$ of price changes is an ergodic  Markov chain with transition matrix $\Pi$; the sequence
is also independent from $N_t$.
Note that $
P(\xi_2=\delta|\xi_1 =\delta) =\sum_{(i,j) \in \dN^2} f(i,j) P^{up}(i,j)
$ and
$
P(\xi_2=\delta|\xi_1=-\delta) = \sum_{i,j} \tilde f(i,j) P^{up}(i,j)$,
so
 the associated transition matrix $\Pi$  is given by
 $$
 \Pi = \left[ \begin{array}{cc} P(\xi_2=-\delta|\xi_1 =-\delta) &  P(\xi_2=\delta|\xi_1 =-\delta)\\
 P(\xi_2=-\delta|\xi_1 =\delta) & P(\xi_2=\delta|\xi_1 =\delta)\end{array}\right],
 $$
 with stationary distribution $(\nu,1-\nu)$ satisfying
 $$
 \nu = P(\xi_1=-\delta) = \frac{P(\xi_2=-\delta|\xi_1 =\delta)}{P(\xi_2=-\delta|\xi_1 =\delta)+P(\xi_2=\delta|\xi_1 =-\delta)}.
 $$
If $\lfloor c\rfloor$ stands for the largest integer smaller of equal to $c$,   then the sequence  $\cW_n(t) = \frac{1}{\sqrt{n}}\sum_{i=1}^{\lfloor nt\rfloor} \left\{\xi_i - E(\xi_i)\right\}$ converges in law to $\sigma \cW(t)$, where $\cW$ is a Brownian motion, and
 the variance $\sigma^2$ is given by
 \begin{equation}\label{eq:sigma}
 \sigma^2 = 4\delta^2 \left[ \nu(1-\nu) +  \nu \sum_{k=1}^\infty \left\{ (\Pi^k)_{11} -\nu\right\}- (1-\nu)\sum_{k=1}^\infty \left\{(\Pi^k)_{21}-\nu\right\}\right],
 \end{equation}
 with $(\Pi^k)_{ij}$ being the element $(i,j)$ of  $\Pi^k$.

\begin{rem}\label{rem:ftilde2}
If $\tilde f = f$, then   the variables $\xi_j$, $j\ge 1$, are i.i.d. In fact,
$$
P(\xi_2=\delta|\xi_1 =\delta) =\sum_{(i,j) \in \dN^2} f(i,j) P^{up}(i,j)
$$
and
\begin{eqnarray*}
P(\xi_2=\delta|\xi_1=-\delta) &=& \sum_{i,j} \tilde f(i,j) P^{up}(i,j) = \sum_{i,j}  f(i,j) P^{up}(i,j)  \\
& =& P(\xi_2=\delta|\xi_1=\delta).
\end{eqnarray*}
Note also that the variables $\xi_j$, $j\ge 1$,  are independent from $\tau_1, \ldots, \tau_n$. However, unless $Q^a=Q^b$ and $f$ is symmetric, one cannot conclude that $P(\xi_i=\delta)=1/2$.
\end{rem}

 Finally, the price process $S$ can be expressed as
 $$
 S_t = S_0 + \sum_{i=1}^{N_t}\xi_i, \qquad t\ge 0.
 $$
To state the final results, set $a_n = n\log{n}$ or $n$, according as $\cC_a+\cC_b=0$ or not. Then, using the results of Section \ref{ssec:counting},  $N_{a_n t}/n$ converges in probability to $t/c$, where $c=c_0$ or $c=c_1$  according as $\cC_a+\cC_b=0$ or not.
It is then easy to show that
$ n^{-1/2} \sum_{i=1}^{N_{a_n t}}\{\xi_i - E(\xi_1)\}  \rightsquigarrow \frac{\sigma}{\sqrt{c}}\tilde \cW$, where $\tilde \cW$ is a Brownian motion. In fact, for any $t\ge 0$,  $\tilde \cW_t = \sqrt{c}\; \cW_{t/c}$.
Next,
\begin{equation}\label{eq:Sdecomp}
S_{a_n t} - nt/c \Ex(\xi_1)= \sum_{i=1}^{N_{a_n t}}\{\xi_i -\Ex(\xi_1)\} + \Ex(\xi_1) (N_{a_n t} -nt/c).
\end{equation}
This expression shows that there are really two sources of randomness involved in the asymptotic behavior of $S_{a_n t} - nt/c \Ex(\xi_1)$. As before, one must consider the cases $\cC_a+\cC_b>0$ and $\cC_a+\cC_b=0$.

\subsubsection{$\cC_a+\cC_b>0$} In this case,  setting $ W_n(t) = \left\{S_{n t} - nt/c_1 \Ex(\xi_1)\right\}/\sqrt{n}$, then
 $W_n \rightsquigarrow  \tilde \sigma W$, where $W$ is a Brownian motion and
 \begin{equation}\label{eq:sigmatilde}
 \tilde \sigma = \left[\frac{\sigma^2}{c_1}+ \frac{\{E(\xi_1)\}^2}{c_1^3}\right]^{1/2}.
 \end{equation}
In fact, $\tilde \sigma W_t = \frac{\sigma}{\sqrt{c_1}}\tilde \cW_t + \frac{\Ex(\xi_1)}{c_1^{3/2}}\dW_t$, where $\tilde \cW$ and $\dW$ are the two independent Brownian motions appearing respectively in the asymptotic behaviour of the Markov chain and the counting process.
Note that the volatility $\tilde \sigma$ could be estimated by taking the standard deviation of the price increments every $10$ minutes, as proposed in \citet{Cont/deLarrard:2013}; see also \citet{Swishchuk/Cera/Schmidt/Hofmeister:2016}. More generally, if $\Delta$ is the time in seconds between successive prices and $s_\Delta$ is the corresponding standard deviation of the price increments over interval of size $\Delta$, then $\hat {\tilde \sigma} = s_\Delta/\sqrt{\Delta}$.

\subsubsection{$\cC_a+\cC_b=0$} In this case, if $\Ex(\xi_1)=0$, then  using \eqref{eq:Sdecomp}, one obtains that $S_{n\log{n} t}/\sqrt{n} \rightsquigarrow \frac{1}{\sqrt{c_0}}\cW_t $,  where $\cW$ is the Brownian motion resulting from the convergence of the Markov chain.

However, if $\Ex(\xi_1)\neq 0$, then
$(S_{n t} - nt/c_0 \Ex(\xi_1))/(n/\log{n}) \rightsquigarrow - \frac{\Ex(\xi_1)}{c_0 \upsilon}\cV_t  $, where $\cV$ is the stable process defined in Section \ref{sssec:C=0}.

\begin{rem}
Note that in \citet{Cont/deLarrard:2013}, $E(\xi_1)=0$, so the limiting process is a Brownian motion whether $\cC_a+\cC_b=0$ or $\cC_a+\cC_b>0$.
\end{rem}

\subsection{Conditioned limit of the price process}\label{ssec:meander}
If one thinks about it, what one wants to achieve in rescaling the price process $S$ is to replace a discontinuous process by a more amenable process if possible, over a given time interval.  However, on this time interval, the price is known to be positive, so the limiting distribution should be also be positive.

If the unconditioned limit is a Brownian motion, then  the conditioned limit, i.e., conditioning on the fact that the Brownian motion is positive, is called a Brownian meander \citep{Durrett/Iglehart/Miller:1977, Revuz/Yor:1999}.  If the unconditioned limit is a stable process, then the conditioned limit could be called a stable meander. See, e.g., \citet{Caravenna/Chaumont:2008} for more details.
Note that according to \citet{Durrett/Iglehart/Miller:1977}, a Brownian meander $W_t^+$ over $(0,1)$ has conditional density
$$
P(W_t^+ \in dy |W_s^+=x) = \left\{\phi_{t-s}(y-x)-\phi_{t-s}(y+x)\right\}\left\{\frac{\Phi_{1-t}(y) - 1/2}{\Phi_{1-s}(x) - 1/2}\right\},
$$
$ 0< s<t<1$, $x,y >0$, where $\Phi_t$ is the distribution function of a centered Gaussian variable with variance $t$ and associated density $\phi_t$.
It then follows that the infinitesimal generator $\cH_t$ of $W_t^+$ is given by
$$
\cH_t f(x) = f'(x)\{1+\phi_{1-t}(x)\}+ \frac{f''(x)}{2}, \quad x>0.
$$

\section{Estimation of parameters}\label{sec:est}

In order to have identifiable parameters, one has to answer the following question about $\alpha$: What happens if $\alpha$ is multiplied by a positive factor $h$? Then, the value $v$ in Assumption \ref{hyp:Average} is multiplied by $h$. Thus the  parameters $\lambda^a$, $\lambda^b$, $\mu^a$, and $\mu^b$ are all divided by $h$, since for example, $\lambda_t^a = \lambda^a \alpha_t$.
As a result, $E_{\cQ}(\tau_1) $ is then  multiplied by $h$ and so is $\gamma_0$. It then follows that  $c_0$ and $c_1$ are invariant by any scaling. So, one could normalize $\alpha$ so that $v=1$. This is what we will assume from now on. The estimation of the parameters will then be easier.

Next, one of the assumptions of the model is that the size of the orders are constant, which is not the case in practice. So in view of applications, and depending of the statistics of sizes for level-1 orders, if the chosen size is $100$ say, then
an order of size $324$ would count for $3.24$ orders.

Assume that data are collected over a period of $n$ days. Recall that time $0$ corresponds to the opening of the market at 9:30:00 ET.
Let $\Lambda^b_{it}$ and $\Lambda^a_{it}$ be the number of limit orders for bid and ask respectively up to time $t$ (measured in seconds) for day $i$. Further let $t_d$ be the number of seconds considered in a day. Typically, $t_d = 23400$. Finally, let $M^b_{it}$ and $M^a_{it}$ be the number of market orders and cancellations for bid and ask respectively up to time $t$ (measured in seconds) for day $i$.
For any $i\ge 1$, set $v_i= \left\{A_{it_d}-A_{(i-1)t_d} \right\}/t_d $, and set $\hat v = \bar v = \frac{1}{n}\sum_{i=1}^n v_i$. Then for any $i\ge 1$, one should have approximately
\begin{eqnarray*}
\hat\mu^a v_i      &=& M^a_{it_d} /t_d , \qquad \hat\mu^b v_i      = M^b_{it_d} /t_d,\\
\hat\lambda^av_i   &=& \Lambda^a_{it_d} /t_d, \qquad \hat\lambda^b v_i  = \Lambda^b_{it_d} /t_d.
\end{eqnarray*}
Having assumed that $v=1$, one can set
\begin{eqnarray*}
\hat\mu^a         &=& \frac{1}{nt_d}\sum_{i=1}^n M^a_{it_d} , \qquad \hat\mu^b     = \frac{1}{nt_d}\sum_{i=1}^n M^b_{it_d},\\
\hat\lambda^a     &=& \frac{1}{nt_d}\sum_{i=1}^n \Lambda^a_{it_d},\qquad \hat\lambda^b    = \frac{1}{nt_d}\sum_{i=1}^n \Lambda^b_{it_d}.
\end{eqnarray*}
Finally, note that the transition matrix $\Pi$ can be estimated directly from the data, as is $1/c_1$ from $N_t/t$.

\subsection{Example of implementation}

For this example, we use the Facebook data provided  in \citet{Cartea/Jaimungal/Penalva:2015}, from November 3rd, 2014 to November 7th, 2014.
First, the results for the spread are given in Table \ref{tab:spread}, from which we can see that most of the time, the spread $\delta$ is $.01\$$.

\begin{table}[h!]
\caption{Spread distribution  in cents for Facebook, from November 3rd, 2014 to November 7th, 2014.}\label{tab:spread}
\begin{center}
\begin{tabular}{|c|cccccc|}
\hline
  & \multicolumn{5}{c}{Day} & \\
Spread  & 1 & 2 & 3 & 4 &5 & Ave.\\
 $1$ &   91.6\% & 91.8\% & 89.7\% &  88.4\% &  93.6\% &  91.0\% \\
 $2$ &   7.6 \% & 8.0 \% & 10.1\% &  11.1\% &   5.9\% &  8.5\%  \\
  $>2$ & 0.8 \% & 0.2 \% &  0.2\% &   0.5\% &   0.5\% &  0.5\% \\
\hline
\end{tabular}
\end{center}
\end{table}

The values in Table \ref{tab:estimation} can be extracted from Figures \ref{fig:Mu_over_T}--\ref{fig:Lambda_over_T}. It follows that $  \hat\lambda^a < \hat\mu^a$ and $  \hat\lambda^b < \hat \mu^b$,
So with these data, we are in the case where $\cC_a+\cC_b>0$, meaning that the unconditioned limiting price process is a Brownian motion with volatility satisfying \eqref{eq:sigmatilde}.

\begin{rem}
According to Figure \ref{fig:Lambda_over_Mu}, on November 3rd, the ratio $\Lambda_{1t_d}^a/M_{1t_d}^a$ is bigger than one, while the ratio $\Lambda_{1t_d}^b/M_{1t_d}^b$ is smaller than one, meaning that most of the time, the bid queue will be depleted before
the ask queue, so the price has a negative trend throughout that day. This is well illustrated in Figure \ref{fig:midprice}, where it is seen that the price indeed goes down on that day.
\end{rem}

\begin{table}[h!]
\caption{Values of $M_{it_d}^b/t_d$, $ M_{it_d}^a/t_d $, $\Lambda_{it_d}^b/t_d $, and  $\Lambda_{it_d}^a/t_d$.}\label{tab:estimation}
\begin{center}
\begin{tabular}{|c|cccc|}
\hline
Day  &  $\Lambda_{it_d}^b/t_d $&   $\Lambda_{it_d}^a/t_d$ & $M_{it_d}^b/t_d$ &  $ M_{it_d}^a/t_d $    \\
\hline
1 &  494.1500  &   563.2474  & 570.6227  &   553.9348  \\
2 &  610.9476  &   578.6165  & 628.9185  &   613.8630  \\
3 &  661.5511  &   658.3967  & 719.7569  &   672.8735  \\
4 &  398.4293  &   401.4344  & 404.4485  &   415.3457  \\
5 &  427.9106  &   440.4546  & 447.9598  &   458.7763  \\
\hline
ave. & 518.5977 & 528.4299   & 554.3413 &  542.9587\\
 & $\hat \lambda^b $ & $\hat \lambda^a $ & $\hat \mu^b $ & $\hat \mu^a $  \\
\hline
\end{tabular}
  \end{center}
 \end{table}

There are basically two ways of estimating $\tilde\sigma$. One can use the standard deviation of high-frequency data, as exemplified in Table \ref{tab:estimationsigma}, or we could use the analytic expression, as
proposed in \citet{Swishchuk/Vadori:2017, Swishchuk/Cera/Schmidt/Hofmeister:2016}.
\begin{table}[h!]
\caption{Estimation of $\tilde \sigma = s_\Delta/\sqrt{\Delta}$ using high-frequency standard deviations.}\label{tab:estimationsigma}
\begin{center}
\begin{tabular}{|c|ccc|}
\hline
  & \multicolumn{3}{|c|}{$\Delta$}   \\
Day  &  10-minute&   5-minute & 1-minute     \\
\hline
1    &    0.0040  &  0.0052    &   0.0057       \\
2    &    0.0079  &  0.0073    &   0.0075          \\
3    &    0.0069  &  0.0070    &   0.0082          \\
4    &    0.0071  &  0.0062    &   0.0059          \\
5    &    0.0038  &  0.0040    &   0.0051          \\
\hline
pooled &    0.0062  &  0.0060    &   0.0066\\
\hline
\end{tabular}
  \end{center}
 \end{table}
To estimate $\tilde\sigma$ analytically, one needs the estimation of the transition matrix $\Pi$. With the data set, we get
$\hat \Pi =
\left[
\begin{array}{cc}
    0.4731177 & 0.5268512	\\
    0.5241391  & 0.475891
    \end{array}\right]$.
It then follows that $\hat \nu = 0.4987$, so  $E(\xi_1) = 0.0026$, and using formula \eqref{eq:sigma}, one obtains $\hat\sigma = 0.0066$.
Next,  $1/\hat c_1 = 0.6194786$, so $\hat{\tilde \sigma} =  0.0053$, which is quite close to the pooled values in Table \ref{tab:estimationsigma}.

\newpage

\begin{figure}[h!]
\begin{center}
\includegraphics[height=2in,width=4in]{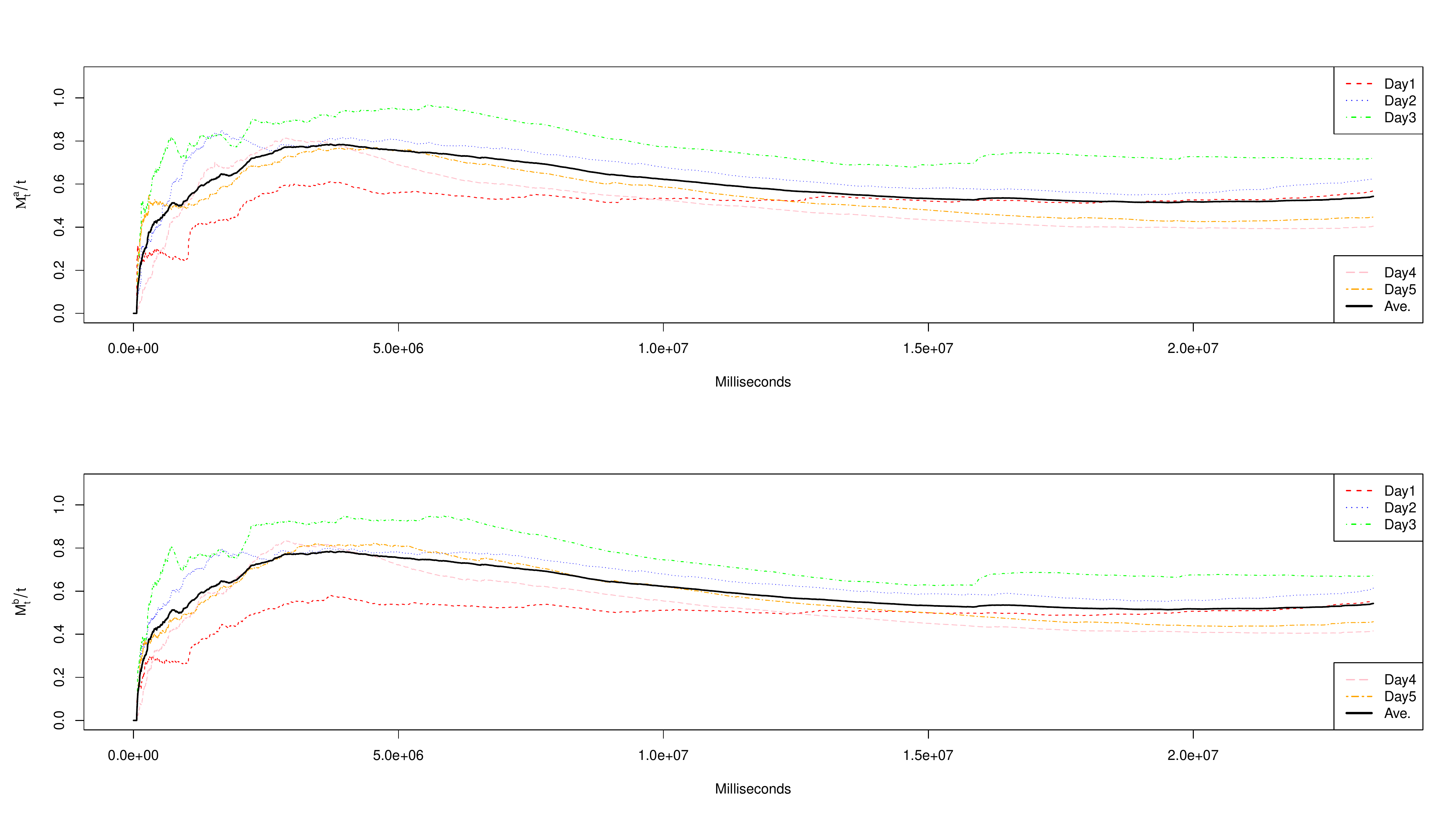}
\caption{Graphs of $M_{it}^a/t$ and $M_{it}^b/t$ for each of the five days.}\label{fig:Mu_over_T}
\end{center}
\end{figure}

\begin{figure}[h!]
\begin{center}
\includegraphics[height=3in,width=4in]{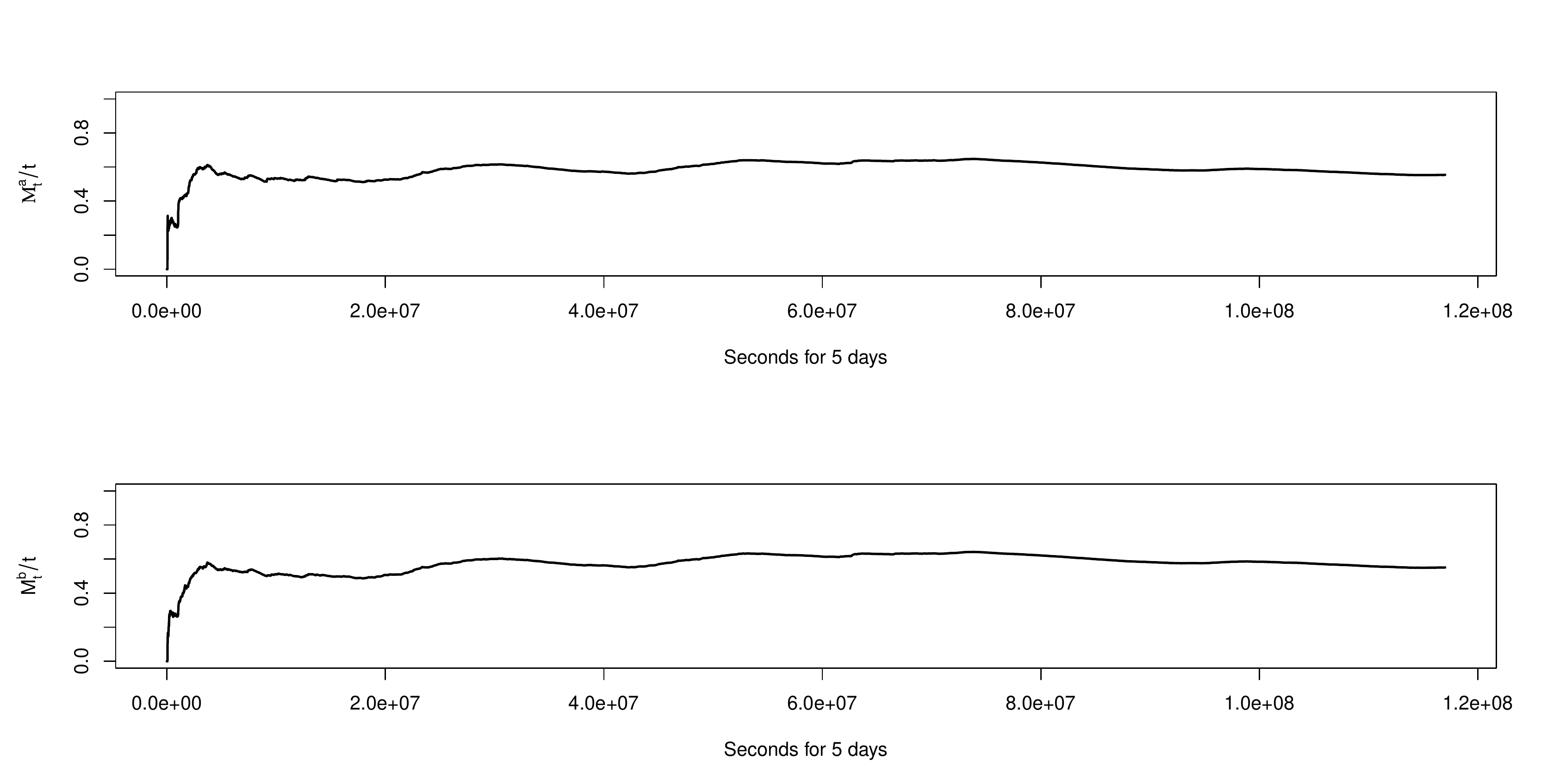}
\caption{Graphs of $M_{t}^a/t$ and $M_{t}^b/t$ for five days.}\label{fig:Mu_over_T_all}
\end{center}
\end{figure}

\newpage

\begin{figure}[h!]
\begin{center}
\includegraphics[height=2in,width=4in]{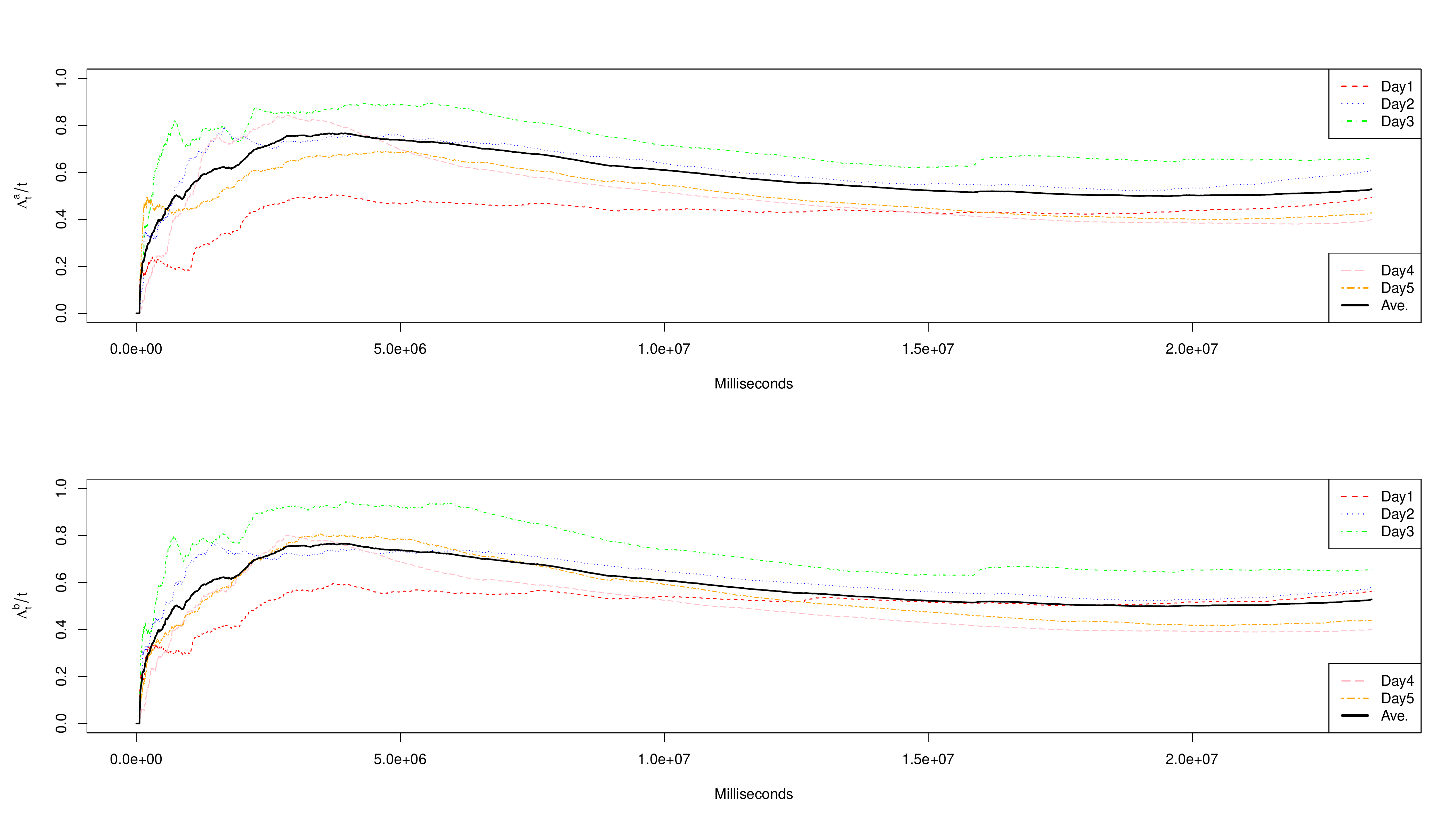}
\caption{Graphs of $\Lambda_{it}^a/t$ and $\Lambda_{it}^b/t$ for each of the five days.}\label{fig:Lambda_over_T}
\end{center}
\end{figure}

\begin{figure}[h!]
\begin{center}
\includegraphics[height=3in,width=4in]{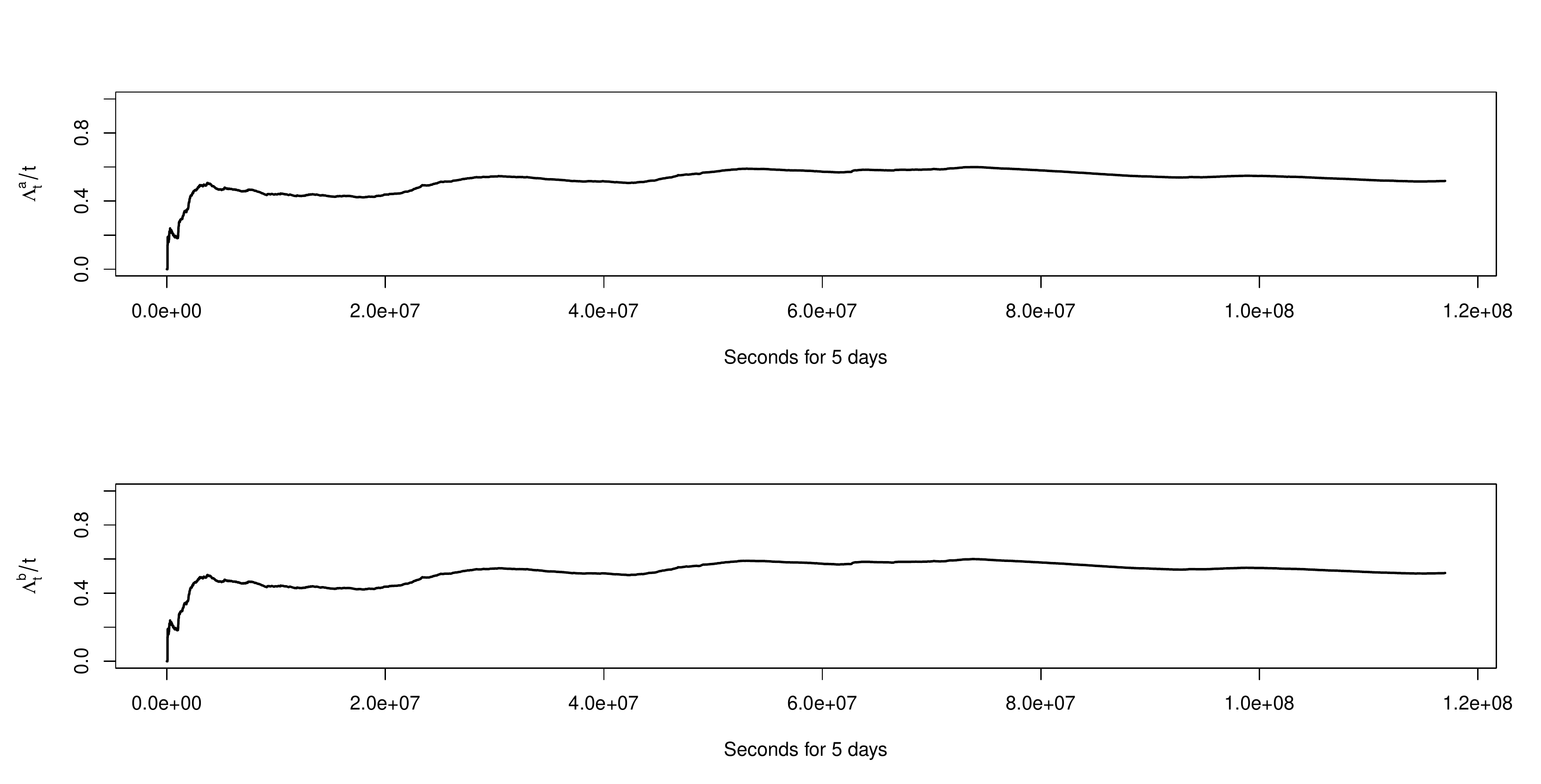}
\caption{Graphs of $\Lambda_{t}^a/t$ and $\Lambda_{t}^b/t$ for five days.}\label{fig:Lambda_over_T_all}
\end{center}
\end{figure}

\newpage

\begin{figure}[h!]
\begin{center}
\includegraphics[height=2in,width=4in]{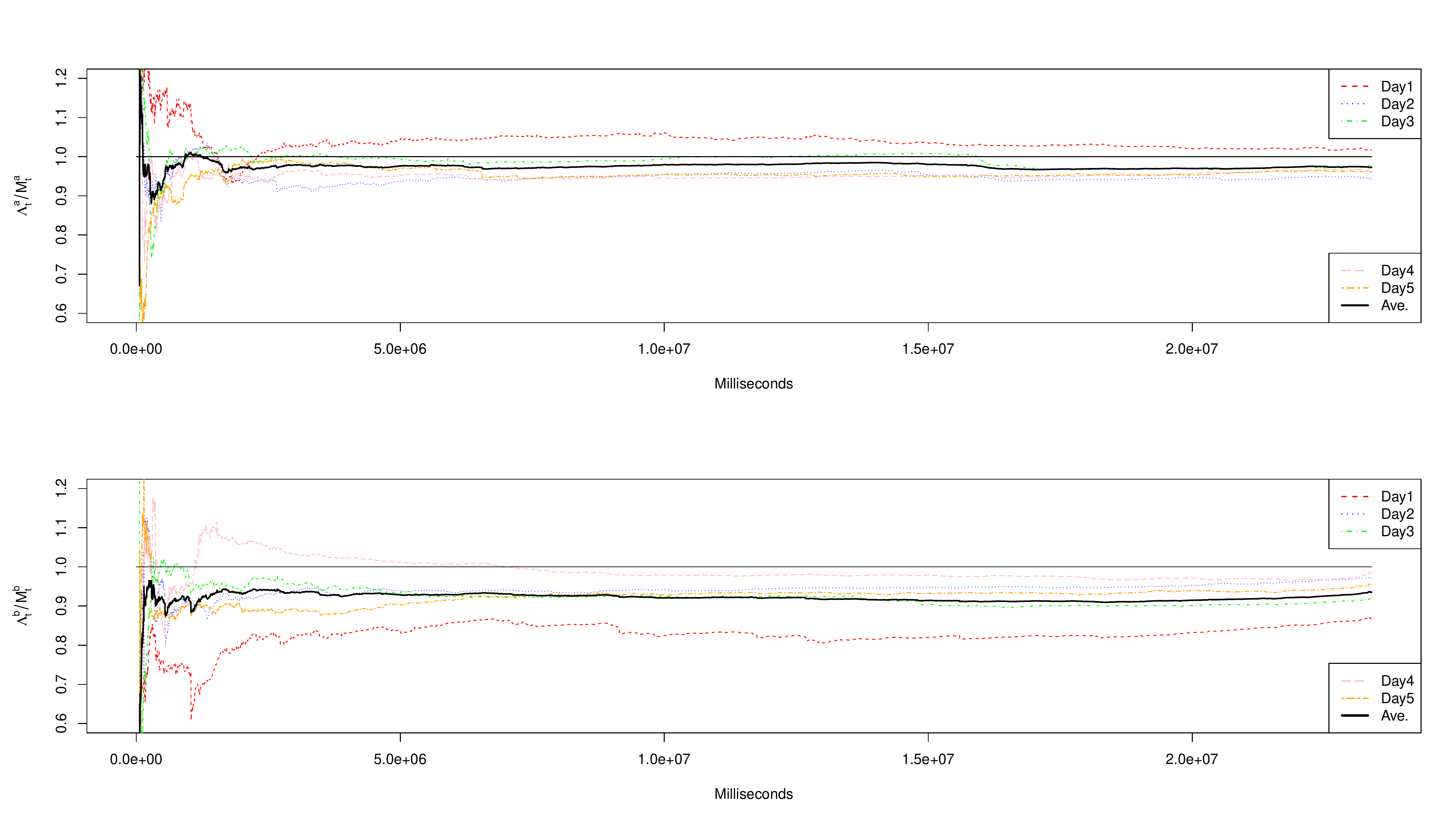}
\caption{Graphs of $\Lambda_{it}^a/M_{it}^a$  and $\Lambda_{it}^b/M_{it}^b$ for each of the five days.}\label{fig:Lambda_over_Mu}
\end{center}
\end{figure}

\begin{figure}[h!]
\begin{center}
\includegraphics[height=3in,width=4in]{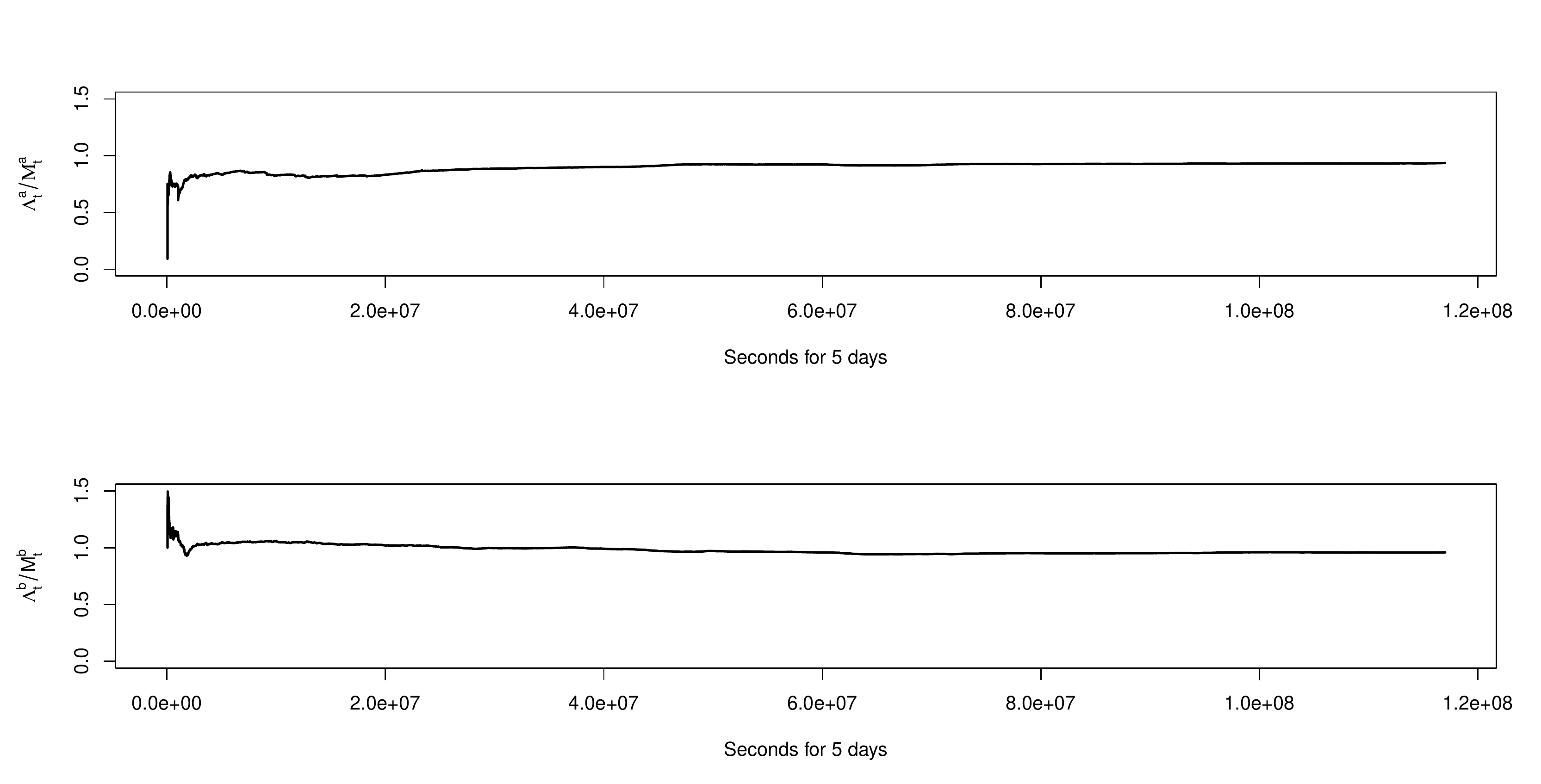}
\caption{Graphs of $\Lambda_{t}^a/M_{t}^a$  and $\Lambda_{t}^b/M_{t}^b$ for five days.}\label{fig:Lambda_over_Mu_all}
\end{center}
\end{figure}

\begin{figure}[h!]
\begin{center}
\includegraphics[height=2in,width=4in]{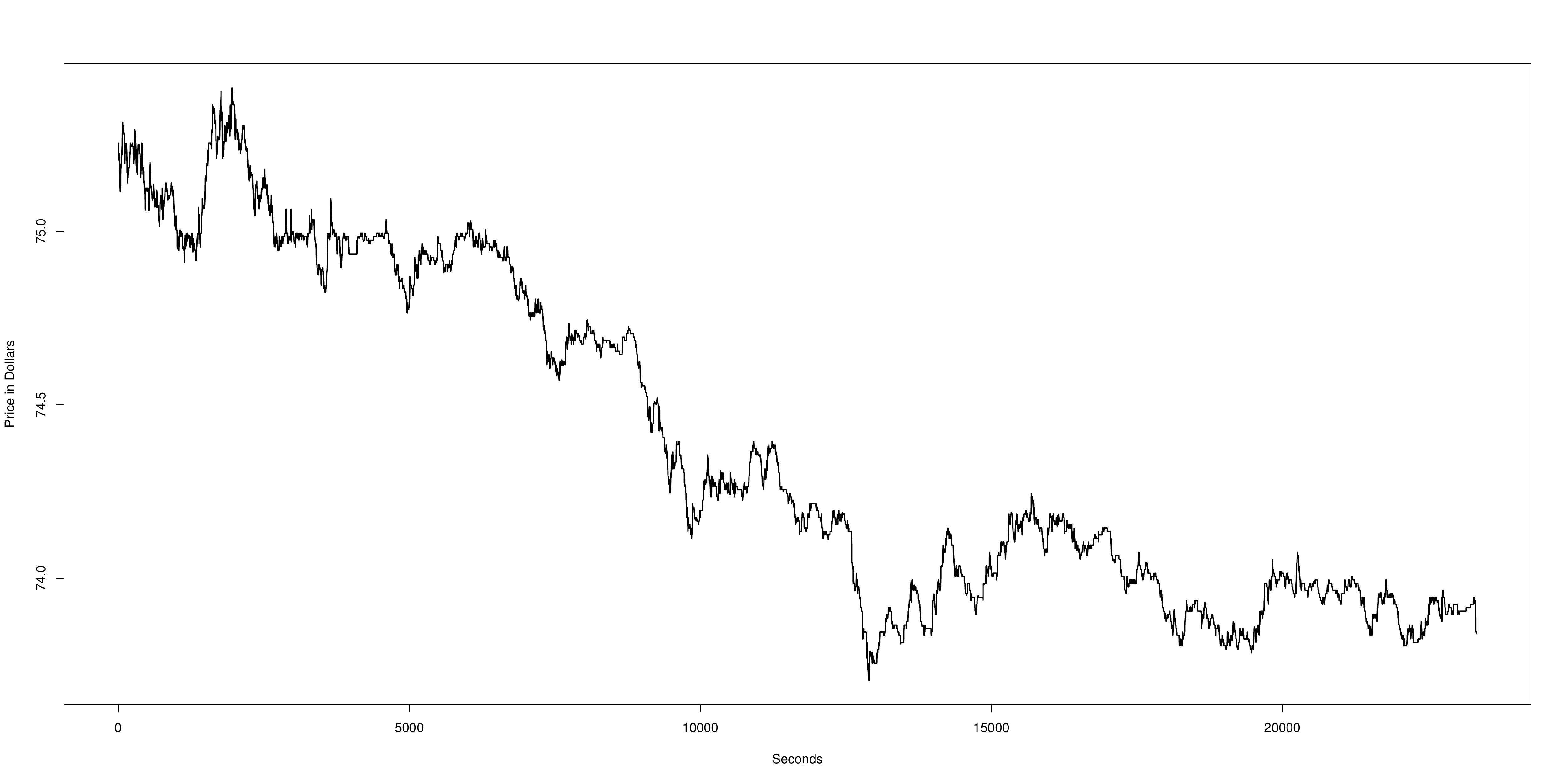}
\caption{Graphs of the midprice for November 3rd, 2014.}\label{fig:midprice}
\end{center}
\end{figure}

\newpage

\bibliographystyle{apalike}

\def\cprime{$'$}

\appendix

\section{Auxiliary results}\label{app:aux}
\begin{prop}\label{prop:laplace}
Suppose that $V_n = X_1+ \cdots +X_n$, where the variables $X_i$ are i.i.d. with $xP(X_i > x) \stackrel{x\to\infty}{\to} c \in (0,\infty)$. Then $\frac{V_n}{n\log{n}} \stackrel{Pr}{\to} c$, as $n\to\infty$.
\end{prop}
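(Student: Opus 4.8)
The plan is to recognize the tail condition $xP(X_1>x)\to c$ as placing the summands in the boundary regime where the truncated mean grows logarithmically, and to invoke the general truncation form of the weak law of large numbers (see \citet{Durrett:1996} or \citet{Feller:1971}; the case at hand is the classical St.\ Petersburg-type example). Since in the intended application the $X_i=\tau_i$ are nonnegative, I assume $X_i\ge 0$; if the $X_i$ may be negative but $E[X_1^-]<\infty$, the negative part contributes $O(n)=o(n\log n)$ and does not affect the limit. Set the truncation level $b_n=n\log n$, define $\bar X_{n,k}=X_k\mathbf{1}_{\{X_k\le b_n\}}$, and take the centering $a_n=nE[\bar X_{n,1}]$.

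The truncation lemma requires two hypotheses: (i) $nP(X_1>b_n)\to 0$, and (ii) $nb_n^{-2}E[\bar X_{n,1}^2]\to 0$. First I would check (i): from $P(X_1>x)\sim c/x$ one gets $nP(X_1>b_n)\sim cn/(n\log n)=c/\log n\to 0$. For (ii) I would use the layer-cake identity
\[
E[X_1^2\mathbf{1}_{\{X_1\le b_n\}}]=\int_0^{b_n}2t\,P(t<X_1\le b_n)\,dt\le \int_0^{b_n}2t\,P(X_1>t)\,dt,
\]
and observe that $2tP(X_1>t)\to 2c$, so the integral is $O(b_n)$; hence $nb_n^{-2}E[\bar X_{n,1}^2]=O(n/b_n)=O(1/\log n)\to 0$. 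Together these give $(V_n-a_n)/b_n\to 0$ in probability.

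It remains to identify the centering asymptotically. Writing $E[\bar X_{n,1}]=E[\min(X_1,b_n)]-b_nP(X_1>b_n)$ and using $E[\min(X_1,b_n)]=\int_0^{b_n}P(X_1>t)\,dt\sim c\log b_n$ together with $b_nP(X_1>b_n)\to c$, I would conclude $E[\bar X_{n,1}]\sim c\log b_n=c(\log n+\log\log n)\sim c\log n$, so that $a_n\sim cn\log n=c\,b_n$ and $a_n/b_n\to c$. Combining the two pieces, $V_n/b_n=(V_n-a_n)/b_n+a_n/b_n\to c$ in probability, which is the assertion. The only genuine technical point is the passage from the raw tail asymptotic $xP(X_1>x)\to c$ to the truncated-moment asymptotics, namely the slowly varying first moment $\int_0^{b_n}P(X_1>t)\,dt\sim c\log b_n$ and the linearly growing second moment $\int_0^{b_n}2tP(X_1>t)\,dt=O(b_n)$; once these are in hand, verifying the hypotheses of the truncation lemma is routine.
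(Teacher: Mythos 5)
Your proof is correct, but it follows a genuinely different route from the paper's. The paper works with Laplace transforms: starting from the identity $\Ex\left[e^{-sX}\right] = 1 - s\int_0^\infty P(X>x)e^{-sx}\,dx$, it converts the tail hypothesis into the expansion $\Ex\left[e^{-sX}\right] = 1 + cs\log{s} + o(s\log{s})$ as $s\to 0$, then uses independence to show $\Ex\left[e^{-sV_n/a_n}\right]\to e^{-cs}$ with $a_n = n\log{n}$, and concludes because convergence in law to a constant implies convergence in probability. You instead truncate at $b_n=n\log{n}$ and appeal to the weak law for triangular arrays (the St.~Petersburg-type argument in \citet{Durrett:1996}), checking $nP(X_1>b_n)\to 0$ and $nb_n^{-2}\Ex[\bar X_{n,1}^2]\to 0$; the step that replaces the paper's transform expansion is your identification of the centering, $\Ex\left[X_1\mathbf{1}_{\{X_1\le b_n\}}\right]\sim c\log{b_n}\sim c\log{n}$, so that $a_n/b_n\to c$. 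Both arguments rest on the same fact in two guises---the truncated mean, equivalently the derivative of the transform at $0$, diverges like $c\log$---and both implicitly take $X_i\ge 0$ (the paper's transform identity is likewise stated only for nonnegative variables), which is harmless since the $\tau_i$ are waiting times. The paper's route is shorter once the transform identity is written down; yours is more elementary in its ingredients (essentially Chebyshev inside the triangular-array weak law) and produces the explicit centering $n\Ex[\bar X_{n,1}]$ rather than only the limit, which is the kind of refinement that becomes relevant in Section~\ref{sssec:C=0}, where the stable limit of $V_n/n - c_0\upsilon\log{n}$ requires exactly such a centering sequence.
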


\begin{proof}
First, for any $s>0$ and $T>0$,
$$
s \int_T^\infty \frac{e^{-sx}}{x} dx = s \int_{sT}^\infty \frac{e^{-y}}{y} dy = - s\log(Ts) e^{-Ts} + s \int_{Ts}^\infty \log(y) e^{-y}dy,
$$
so as $s\to 0$,  $ s \int_T^\infty \frac{e^{-sx}}{x} dx \sim -s\log{s}$.
Next, for any non negative random variable $X$ and any $s\ge 0$,
$$
\Ex\left[e^{-s X} \right] = 1 - s\int_0^\infty P(X>x)e^{-sx} dx.
$$
As a result, if $P(X > x) \sim c/x$, as $x\to\infty$, 
then, as $s\to 0$,
$$
\Ex\left[e^{-s X} \right] = 1 +c s \log{s} + o(s\log{s}).
$$
Therefore, setting $a_n = n\log{n}$, one obtains, for a fixed $s>0$,
\begin{eqnarray*}
\Ex\left[e^{-s V_n/a_n} \right]  &=& \left[ \Ex\left[e^{-s X_1/a_n} \right] \right]^n \\
&=&  \left\{1 - \frac{sc}{a_n} \log(sa_n) + o\left(log(a_n)/a_n\right)      \right\}^n  \\
&\stackrel{n\to\infty}{\to}&  e^{-cs},
\end{eqnarray*}
since $\frac{ns}{a_n} \log(sa_n) \to s$ as $n\to\infty$. Hence,  $V_n/a_n \stackrel{Pr}{\to} c$, as $n\to\infty$.
\end{proof}

\begin{prop}\label{prop:counting}
Suppose that $V_n/f(n) \stackrel{Pr}{\to} c$, as $n\to\infty$, where $f(n) \to \infty$ is regularly varying of order $\alpha$. Define $N_t = \max\{n \ge 0; \; V_n \le t\}$ and suppose that for some function $g$ on $(0,\infty)$, $f\circ g(t) \sim g\circ f(t)\sim t$, as $t\to\infty$. Then  $N_t/g(t) \stackrel{Pr}{\to} c^{-1/\alpha}$.
\end{prop}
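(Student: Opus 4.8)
The plan is to exploit the defining inversion relation between the sequence $(V_n)$ and the counting process $N$. Directly from $N_t=\max\{n\ge 0:\,V_n\le t\}$ one has, for every integer $n\ge 1$ and every $t\ge 0$, the equivalence of events
\[
\{N_t\ge n\}=\{V_n\le t\}.
\]
First I would record the preliminary fact that $N_t\stackrel{Pr}{\to}\infty$ as $t\to\infty$: for each fixed $n$, $\Px(N_t\ge n)=\Px(V_n\le t)\to 1$ because $V_n$ is a finite random variable. This is what allows the hypothesis $V_n/f(n)\stackrel{Pr}{\to}c$ to be transferred to a deterministic but $t$-dependent index tending to infinity.

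Next, fix $x>0$ and set $n_t=\lceil x\,g(t)\rceil$, so that $\{N_t\ge x\,g(t)\}=\{N_t\ge n_t\}=\{V_{n_t}\le t\}$, while $n_t\to\infty$ and $n_t/(x\,g(t))\to 1$. The key asymptotic estimate I would establish is
\[
\frac{f(n_t)}{t}=\frac{f(n_t)}{f(x\,g(t))}\cdot\frac{f(x\,g(t))}{f(g(t))}\cdot\frac{f(g(t))}{t}\longrightarrow 1\cdot x^{\alpha}\cdot 1=x^{\alpha},
\]
where the first factor tends to $1$ by the uniform convergence theorem for regularly varying functions (since $n_t/(x\,g(t))\to 1$ and $g(t)\to\infty$), the middle factor tends to $x^{\alpha}$ by regular variation of order $\alpha$, and the last factor tends to $1$ by the hypothesis $f\circ g(t)\sim t$. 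Combining $f(n_t)/t\to x^{\alpha}$ with $V_{n_t}/f(n_t)\stackrel{Pr}{\to}c$ (valid since $n_t\to\infty$) yields $V_{n_t}/t\stackrel{Pr}{\to}c\,x^{\alpha}$.

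The conclusion then follows by a squeeze on the two regimes separated by $c^{-1/\alpha}$. If $x>c^{-1/\alpha}$, then $c\,x^{\alpha}>1$, so $\Px(N_t\ge x\,g(t))=\Px(V_{n_t}/t\le 1)\to 0$; if $x<c^{-1/\alpha}$, then $c\,x^{\alpha}<1$ and the identical computation gives $\Px(N_t\ge x\,g(t))\to 1$. Thus $\Px(N_t/g(t)\ge x)\to 1$ for every $x<c^{-1/\alpha}$ and $\to 0$ for every $x>c^{-1/\alpha}$, which is precisely $N_t/g(t)\stackrel{Pr}{\to}c^{-1/\alpha}$.

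The main obstacle is the interplay between the integer index $n_t$ and the continuous scaling $x\,g(t)$, together with the passage of the ``in probability'' convergence $V_n/f(n)\to c$ to the $t$-dependent subsequence $n_t$. The former is handled cleanly by the uniform convergence theorem for regular variation, and the latter is exactly why the preliminary observation $N_t\to\infty$ is needed. The boundary value $x=c^{-1/\alpha}$ requires no separate analysis, since the squeeze uses only $x$ strictly on either side of it.
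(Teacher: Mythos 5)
Your proof is correct, but it takes a genuinely different route from the paper's. The paper argues in the style of the classical renewal theorem (following Durrett): starting from $V_{N_t}\le t<V_{N_t+1}$, it divides by $f(N_t)$, transfers the hypothesis $V_n/f(n)\stackrel{Pr}{\to}c$ to the \emph{random} index $N_t$ (using $N_t\to\infty$), concludes $f(N_t)/t\stackrel{Pr}{\to}1/c$, and then applies $g$, which is regularly varying of order $1/\alpha$. You instead work with the inversion identity $\{N_t\ge n\}=\{V_n\le t\}$ at the \emph{deterministic} index $n_t=\lceil x\,g(t)\rceil$, establish $f(n_t)/t\to x^{\alpha}$ from the uniform convergence theorem for regularly varying functions together with $f\circ g(t)\sim t$, and then squeeze on either side of $c^{-1/\alpha}$. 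What your route buys is rigor at precisely the weakest point of the paper's argument: convergence in probability does not in general pass to random indices that tend to infinity only in probability (unlike almost sure convergence), and this matters here because in the case $\cC_a+\cC_b=0$, where $f(n)=n\log n$, the hypothesis $V_n/f(n)\to c$ holds only in probability, not almost surely. Your deterministic-index argument sidesteps that delicacy entirely; the price is the (standard) appeal to the uniform convergence theorem to handle the rounding $n_t/(x\,g(t))\to 1$. Two small remarks: your preliminary observation that $N_t\stackrel{Pr}{\to}\infty$ is never actually used in your argument --- it is the paper's sandwich, not your inversion, that needs it --- and both proofs implicitly use $g(t)\to\infty$, which follows from $f\circ g(t)\sim t$ and local boundedness of $f$.
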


\begin{proof} The proof is similar to the proof of the renewal theorem in \citet{Durrett:1996}[Theorem 7.3].
By definition,  $V_{N_t} \le t < V_{N_t+1}$. As a result,
$$
\frac{V_{N_t}}{f(N_t)} \le \frac{t}{f(N_t)} < \frac{V_{N_t}}{f(N_t+1)} \frac{f(N_t+1)}{f(N_t)}.
$$
By hypothesis, $V_n/f(n)$ converges in probability to $c \in (0,\infty)$, as $n\to\infty$. Also, since $V_n$ is finite for any $n\in \dN$, it follows that $N_t$ converges in [probability to $+\infty$ as $t\to\infty$. Next, since $f(n+1)/f(n)\to 1$ as $n\to\infty$, it follows that as $t\to\infty$, $f(N_t)/t$ converges in probability to $\frac{1}{c}$. Also, $g$ is regularly varying of order $1/\alpha$, so one may conclude that $N_t/g(t) \stackrel{Pr}{\to} c^{-1/\alpha}$.
\end{proof}
\begin{rem}\label{rem:inverse}
If $f(t) = t\log{t}$, then $\alpha=1$ and one can take $g(t) = t/\log{t}$.
\end{rem}

\begin{prop} \label{prop:psi}
Set $\psi_{\lambda}(t,x) =\int_t^\infty \frac{1}{u}I_x(2u\lambda)e^{-2u\lambda}du$, for any $t,x,\lambda>0$. Then there exists a constant $C$ so that for any $x,\lambda>0$, and any $t\ge \frac{1}{2\lambda}$,
$
\psi_\lambda(t,x) \le \frac{C}{\sqrt{2\lambda t}}.
$
\end{prop}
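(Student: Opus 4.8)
The plan is to bound $\psi_\lambda(t,x)$ in three moves: first replace the modified Bessel function $I_x$ by $I_0$, then establish a uniform-in-$z$ decay estimate for $e^{-z}I_0(z)$, and finally integrate the resulting power of $u$. The point is that all the $x$- and $\lambda$-dependence can be stripped away before any integration is done, leaving a single elementary improper integral.

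First I would reduce the order to zero. For $x\in\N$ (the only case needed in the applications, since $x$ is a queue size) the integral representation $I_x(z)=\frac1\pi\int_0^\pi e^{z\cos\theta}\cos(x\theta)\,d\theta$ quoted earlier, together with $\cos(x\theta)\le 1$, gives $I_x(z)\le I_0(z)$ for every $z>0$; for general real order one invokes instead the standard monotonicity of $\nu\mapsto I_\nu(z)$ on $[0,\infty)$. Either way $\psi_\lambda(t,x)\le\psi_\lambda(t,0)$, so it suffices to treat $x=0$.

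The core step is the uniform bound $e^{-z}I_0(z)\le \tfrac12\sqrt{\pi/(2z)}$ for all $z>0$. I would obtain it from the same integral representation: writing $1-\cos\theta=2\sin^2(\theta/2)$,
\[
e^{-z}I_0(z)=\frac1\pi\int_0^\pi e^{-2z\sin^2(\theta/2)}\,d\theta .
\]
Since $\sin(\theta/2)$ is concave on $[0,\pi]$ and agrees with the line $\theta/\pi$ at the two endpoints, one has $\sin(\theta/2)\ge \theta/\pi$ there, so the integrand is dominated by $e^{-2z\theta^2/\pi^2}$; extending the range to $(0,\infty)$ and evaluating the Gaussian integral yields the claimed bound.

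Finally, substituting $z=2u\lambda$ gives $e^{-2u\lambda}I_0(2u\lambda)\le \tfrac{\sqrt\pi}{4}(u\lambda)^{-1/2}$, whence
\[
\psi_\lambda(t,0)\le \frac{\sqrt\pi}{4\sqrt\lambda}\int_t^\infty u^{-3/2}\,du=\frac{\sqrt\pi}{2\sqrt{\lambda t}}=\sqrt{\frac\pi2}\,\frac{1}{\sqrt{2\lambda t}},
\]
and one takes $C=\sqrt{\pi/2}$. I expect the only genuine obstacle to be pinning down the uniform Bessel estimate in the third paragraph; the elementary inequality $\sin(\theta/2)\ge\theta/\pi$ is precisely what makes it work, and it also shows that the hypothesis $t\ge \frac{1}{2\lambda}$ is in fact unnecessary, the bound holding for every $t>0$.
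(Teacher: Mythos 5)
Your proof is correct, but it takes a genuinely different route from the paper's. Both arguments begin the same way, discarding $\cos(x\theta)\le 1$ in the integral representation to reduce to order zero; in fact your appeal to the monotonicity of $\nu\mapsto I_\nu(z)$ for non-integer order is the cleaner way to cover the general $x>0$ of the statement, since the representation $\frac1\pi\int_0^\pi e^{z\cos\theta}\cos(x\theta)\,d\theta$ used in the paper is exact only for integer $x$. After that the proofs diverge. The paper first rescales via $\psi_\lambda(t,x)=\psi_{1/2}(2\lambda t,x)$, bounds $I_0(z)\le \tfrac12+\tfrac1\pi\int_0^1 e^{zs}(1-s^2)^{-1/2}\,ds$, swaps the order of integration, and controls the resulting exponential integrals through $E_1(u)\le e^{-u}\ln(1+1/u)$ combined with $\ln(1+x)\le x$ and $\ln(1+x)\le x^{2/5}$; this produces the constant $C=e^{-1}/2+\Gamma(\tfrac{1}{10})/\pi$ and is precisely where the hypothesis $t\ge 1/(2\lambda)$ (i.e.\ $2\lambda t\ge 1$ after rescaling) enters. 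You instead prove the uniform pointwise estimate $e^{-z}I_0(z)\le\tfrac12\sqrt{\pi/(2z)}$ by a Laplace-type argument resting on the chord inequality $\sin(\theta/2)\ge\theta/\pi$, and then integrate $u^{-3/2}$ directly. Your route is shorter, bypasses the exponential-integral machinery, gives the explicit and sharper constant $C=\sqrt{\pi/2}$ (consistent with the asymptotics $e^{-z}I_0(z)\sim(2\pi z)^{-1/2}$, exceeding it only by a factor $\pi/2$), and, as you observe, shows that the restriction $t\ge 1/(2\lambda)$ in the statement is superfluous, the bound holding for all $t>0$. The paper's argument yields nothing beyond what yours does; it is simply a more laborious path to a weaker form of the same conclusion.
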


\begin{proof}

First, note that $\psi_{\lambda}(t,x) = \psi_{1/2}(2\lambda t,x)$. It is well-known that
\begin{eqnarray*}
I_x(z) &=&\frac{1}{\pi}\int_0^\pi e^{z\cos\theta}\cos(x\theta)d\theta \le \frac{1}{\pi}\int_0^\pi e^{z\cos\theta}d\theta \\
& \le & \frac{1}{2}+ \frac{1}{\pi}\int_0^1 \frac{e^{zs}}{\sqrt{1-s^2}}ds.
\end{eqnarray*}
Next, set
$
E_1(u):=\int_{u}^{\infty}\frac{e^{-w}}{w} dw$, $ u\more0$. Then
\begin{eqnarray*}
\psi_{1/2}(t,x) &\le &  \int_t^\infty \frac{e^{-u}}{u} \left\{\frac{1}{2}+ \frac{1}{\pi}\int_0^1 \frac{e^{us}}{\sqrt{1-s^2}}ds\right\}du\\
&=& \frac{1}{2}E_1(t) +\frac{1}{\pi}\int_t^\infty\int_0^1 \frac{e^{-su}}{u\sqrt{s(2-s)}}ds du\\
&=&  \frac{1}{2}E_1(t) +\frac{1}{\pi}\int_0^1 \frac{E_1(st)}{\sqrt{s(2-s)}}ds.\\
&=&  \frac{1}{2}E_1(t) +\frac{1}{\pi}\int_0^t \frac{E_1(s)}{\sqrt{s(2t-s)}}ds.
\end{eqnarray*}
According to  \citet[Section 6.8.1]{Olver/Lozier/Boisvert/Clark:2010},  $E_{1}(u)\le e^{-u}\ln\left(1+1/u\right)$ for any $u>0$. Furthermore,
$\ln(1+x)\leq x$ and $\ln(1+x)\leq x^{2/5}$  for any $x\ge 0$.  As a result,
\begin{eqnarray*}
\psi_{1/2}(t,x) &\le &   \frac{e^{-t}}{2t}  +\frac{t^{-1/2}}{\pi}\int_0^t s^{-9/10} e^{-s}ds \le  \frac{e^{-t}}{2t}  +\frac{\Gamma(\frac{1}{10})}{\pi t^{1/2}} \le Ct^{-1/2}
\end{eqnarray*}
for any $t\ge 1$, where $C = \frac{e^{-1}}{2}+ \frac{\Gamma(\frac{1}{10})}{\pi}$.
\end{proof}

\section{Proofs}\label{app:proofs}

\begin{proof}[Proof of Lemma \ref{lemma:tail:sigma}] From \cite{Olver/Lozier/Boisvert/Clark:2010}[Formula 10.30.4], for fixed $\nu$,
$
I_\nu(z)\sim \frac{e^z}{\sqrt{2\pi z}}$ as $z\to\infty$. Also, from \citet[p. 376]{Abramowitz/Stegun:1972}, $ I_n(z)= \frac{1}{\pi}\int_0^\pi e^{z \cos{\theta}}\cos(n\theta)d\theta 	$, so
for any $x\in \dN$, $I_n(z)\le e^{z}$.
Thus, as $T\to\infty$,
\begin{eqnarray*}
\Px_x[\sigma_Y\more T]&= &\left(\frac{\mu}{\lambda}\right)^{x/2}\int_T^{\infty}\frac{x}{s}I_x\left(2s\sqrt{\lambda\mu}\right)e^{-s(\lambda+\mu)}ds\\
		&\sim &\left(\frac{\mu}{\lambda}\right)^{x/2}\int_T^{\infty}\frac{x}{s} \frac{e^{2s\sqrt{\lambda\mu}}}{\sqrt{4s\pi\sqrt{\lambda\mu}}}e^{-s(\lambda+\mu)}ds\\
		&\sim &\left(\frac{\mu}{\lambda}\right)^{x/2} \frac{x}{2\sqrt{\pi\sqrt{\lambda\mu}}} \int_T^{\infty}s^{-3/2}e^{-s\cC}ds.
\end{eqnarray*}
Also, for any $x\in \dN$,
\begin{equation}\label{eqn:upperbound:survival}
\Px_x[\sigma_Y\more T] \le x \left(\frac{\mu}{\lambda}\right)^{x/2} \int_T^{\infty}s^{-1}e^{-s\cC}ds.
\end{equation}
Consequently, if $\lambda=\mu$, $\cC=0$ and
\begin{align*}
\Px_x[\sigma_Y\more T] &\sim  \frac{x}{2\lambda\sqrt{\pi}} \int_T^{\infty}s^{-3/2}ds \sim \frac{x}{2\lambda\sqrt{\pi}} \frac{2}{\sqrt{T}} 		\sim \frac{x}{\lambda\sqrt{\pi T}} .
\end{align*}
This agrees with the result proved in \cite{Cont/deLarrard:2013}. However, if $\lambda\less\mu$, using the change of variable $u = s\cC$, one gets
\begin{align*}
\Px_x[\sigma_Y\more T] &\sim \cC^{1/2}\left(\frac{\mu}{\lambda}\right)^{x/2} \frac{x}{2\sqrt{\pi\sqrt{\lambda\mu} } } \int_{T\cC}^{\infty}{u^{-3/2}}e^{-u} du \\
		&\sim \left(\frac{\mu}{\lambda}\right)^{x/2} \frac{x}{\sqrt{\pi\sqrt{\lambda\mu}}} \left[\frac{e^{-T\cC}}{\sqrt{T}} - \sqrt{\cC} \Gamma\left(\frac{1}{2},T\cC\right)\right].
\end{align*}
To compute the expectation in the case where $\lambda=\mu$, note that for large enough $T$,
$
\Ex_x\left[\sigma_Y\right]= \int_0^\infty \Px_x[\sigma_Y\more t] dt \geq
\frac{x}{2\lambda\sqrt{\pi}} \int_T^\infty \frac{1}{\sqrt{t}}dt = \infty$,
whereas if $\lambda\less \mu$, for a sufficiently large $T$, there are finite constants $C_1$ and $C_2$ such that for any $0 \le \theta <\cC$,
\begin{align*}
\Ex_x\left[e^{\theta\sigma_Y} \right]&= 1+  \theta \int_0^\infty e^{\theta t} \Px_x[\sigma_Y\more t] dt \leq C_1 + \theta C_2\int_T^\infty  e^{-t(\cC-\theta)} dt\\
		&=C_1 + C_2 \frac{e^{-T(\cC-\theta)}}{(\cC-\theta)}  \less\infty.
\end{align*}
\end{proof}

\begin{proof}[Proof of Proposition \ref{prop:Distr:Sn}]
Let $F_{n,Q}(t;x,y)$ and $F_{n,L}(t;x,y)$ denote the cdf of $S^n_{Q}$ and $S^n_{L}$, respectively, starting from $z_0= (x,y)$,  with densities $f_{n,\cQ}(t;z_0)$ and $f_{n,\cL}(t;z_0)$, where
$F_{n,Q}(\cdot; z_0)$ is the convolution of $F_{1,Q}$ $(n-1)$ times with $F_{1,Q}(\cdot; z-0)$. The result will be proven by induction. The base case $n=1$ is given in Corollary \ref{cor:Distr:tau}. Assume the result is true for any $m\leq n\in\N$. Then by Corollary \ref{cor:Distr:tau} and the induction hypothesis,
\begin{equation}\label{eqn:rel:Cx:Ax}
F_{\cL}(t;x,y)=F_{\cQ}(A_t;x,y) \text{ and } f_{n,\cL}(t;x,y)=f_{n,\cQ}(A_t;x,y)\alpha_t.
\end{equation}
Also, by the definition of $\tau_n$ and $V_n$, under Assumption \ref{hyp:ftilde}, if $z_0=(x,y)$, then
\begin{align*}
F_{n,\cL}(t;z_0) &= \Px_{\cL}[V_{n+1} \leq t |\; q_0 = z_0 ]=\Px_{\cL}[V_{n}\leq t, \tau_{n+1}\leq t-V_{n} | \; q_0=z_0]\\
		&= \sum_z f(z) \int_0^{t} \Px_{\cL}[\tau_{n+1}\leq t-u| q_u=z]f_{n,\cL}(u;z_0)du\\
		&= \sum_z f(z)\int_0^{t} \Px_\cQ\left[\tau_{n+1}\leq A_{t-u}^{(n+1)}|q_u=z\right]f_{n,\cQ}(A_u;z_0)\alpha_u du\\
		&=\int_0^{t} F_{1,\cQ}(A_t-A_u)f_{n,\cQ}(A_u;z_0)\alpha_u du = \int_0^{A_t} F_{1,\cQ}(A_t-u)f_{n,\cQ}(u;z_0) du\\
		&=\int_0^{A_t} F_{1,\cQ}(A_t-u)dF_{n,\cQ}(u;z_0) =\int_0^{A_t} F_{n,\cQ}(A_t-u)dF_{1,\cQ}(u;z_0)\\
		&=\Px_{\cQ}\left[V_{n+1} \leq A_t| \; q_0=z_0\right],
\end{align*}
where we used the fact that for any $s\ge 0$,  $\alpha^{(n+1)}(s) = \alpha(s+u)$  given $V_{n}=u$, so $A^{(n+1)}(t) = \int_0^{t} \alpha(s+u)ds = A_{t+u}-A_u$.
Furthermore, in the last equality we used  the fact that for $X$ and $Y$, non-negative independent random variables,
\[
F_{X+Y}(t)=\Px[X+Y\leq t]=F_X*F_Y(t)=\int_0^tF_X(t-x)dF_Y(x),
\]
with $F_X$ and $F_Y$ denoting the cdfs of $X$ and $Y$. Furthermore, starting $q_0$ from distribution $f$, one obtains that
$\Px_\cL[V_n \le t] = \Px_\cQ[V_n \le A_t]$.
\end{proof}

\end{document}